\definecolor{darkred}  {rgb}{0.5,0,0}
\definecolor{darkblue} {rgb}{0,0,0.5}
\definecolor{darkgreen}{rgb}{0,0.5,0}
\newcommand{\ket}[1]{|#1\rangle}
\newcommand{\bra}[1]{\langle#1|}
\newcommand{\braket}[2]{\langle#1|#2\rangle}
\newcommand{\proj}[1]{|#1\rangle\langle#1|}
\DeclarePairedDelimiter{\set}{\lbrace}{\rbrace}
\DeclarePairedDelimiter{\abs}{\lvert}{\rvert}
\DeclarePairedDelimiter{\norm}{\lVert}{\rVert}
\DeclarePairedDelimiter{\floor}{\lfloor}{\rfloor}
\DeclarePairedDelimiter{\ceil}{\lceil}{\rceil}
\newcommand{\C}{\mathbb{C}}
\newcommand{\R}{\mathbb{R}}
\newcommand{\N}{\mathbb{N}}
\newcommand{\ct}{^{\dagger}}
\newcommand{\tp}{^{\mathsf{T}}}
\newcommand{\x}{\otimes}
\newcommand{\mx}[1]{\begin{pmatrix}#1\end{pmatrix}}
\newcommand{\smx}[1]{\bigl(\begin{smallmatrix}#1\end{smallmatrix}\bigr)}
\newcommand{\Hi}{\mathcal{H}}
\newcommand{\Reg}{\mathsf{R}}
\newcommand{\m}{m} 
\DeclareMathOperator{\spn}{span}
\DeclareMathOperator{\diag}{diag}
\DeclareMathOperator{\HT}{HT}
\DeclareMathOperator{\limHT}{\HT^{+}(\mathit{P,M})}
\newenvironment{algobox}[1]
{\begin{center}\begin{minipage}{#1\textwidth}\hrulefill\\}
{\vspace{-8pt}\hrulefill\end{minipage}\end{center}}
\newcommand{\alg}[1]{\textbf{#1}}
\newcommand{\RWAtxt}{\alg{Random Walk Algorithm}}      \newcommand{\RWA}{\hyperref[alg:RWA]{\RWAtxt}}
\newcommand{\Itptxt}{\alg{Interpolation}}              \newcommand{\Itp}{\hyperref[alg:Itp]{\Itptxt}}
      \newcommand{\QEE}{\hyperref[alg:QEE]{\QEEtxt}}
\newcommand{\QWStxt}{\alg{Search}}                     \newcommand{\QWS}{\hyperref[alg:QWS]{\QWStxt}}
\newcommand{\QWRtxt}{\alg{Incremental Search}}         \newcommand{\QWR}{\hyperref[alg:QWR]{\QWRtxt}}
\newcommand{\PFT}{\hyperref[thm:PF]{Perron--Frobenius Theorem}}
\newcommand{\Thm}[1]{\hyperref[thm:#1]{Theorem~\ref*{thm:#1}}}
\newcommand{\Lem}[1]{\hyperref[lem:#1]{Lemma~\ref*{lem:#1}}}
\newcommand{\Prop}[1]{\hyperref[prop:#1]{Prop.~\ref*{prop:#1}}}
\newcommand{\Cor}[1]{\hyperref[cor:#1]{Corollary~\ref*{cor:#1}}}
\newcommand{\Def}[1]{\hyperref[def:#1]{Definition~\ref*{def:#1}}}
\newcommand{\Sect}[1]{\hyperref[sect:#1]{Sect.~\ref*{sect:#1}}}
\newcommand{\Chap}[1]{\hyperref[chap:#1]{Chapter~\ref*{chap:#1}}}
\newcommand{\Apx}[1]{\hyperref[apx:#1]{Appendix~\ref*{apx:#1}}}
\newcommand{\Fig}[1]{\hyperref[fig:#1]{Fig.~\ref*{fig:#1}}}
\newcommand{\Tab}[1]{\hyperref[tab:#1]{Table~\ref*{tab:#1}}}
\newcommand{\EqRef}[1]{\hyperref[eq:#1]{(\ref*{eq:#1})}}
\newcommand{\Eq}[1]{Eq.~\hyperref[eq:#1]{(\ref*{eq:#1})}}
\newcommand{\step}[1]{\hyperref[step:#1]{step~\ref*{step:#1}}}
\def\eg{\textit{e.g.}}
\def\ie{\textit{i.e.}}
\def\etal{\textit{et al.}}
\newcommand{\X}{\mathcal{X}}
\newcommand{\reflex}{\mathrm{ref}}
\newcommand{\shift}{\textsc{Shift}}
\newcommand{\ancilla}{\ket{\bar{0}}}
\newcommand{\pstar}{p^*}
\newcommand{\sstar}{s^*}
\newcommand{\Order}{\mathrm{O}}
\renewcommand{\check}{\mathtt{Check}}
\newcommand{\setup}{\mathtt{Setup}}
\newcommand{\update}{\mathtt{Update}}
\newcommand{\checkingcost}{\mathsf{C}}
\newcommand{\setupcost}{\mathsf{S}}
\newcommand{\updatecost}{\mathsf{U}}
\newcommand{\find}{\mbox{\textsc{Find}}}
\newcommand{\detect}{\mbox{\textsc{Detect}}}
\newcommand{\sample}{\mbox{\textsc{Sample-marked}}}
\theoremstyle{definition}
\newtheorem{definition}{Definition}
\theoremstyle{plain}
\newtheorem*{theorem*}{Theorem}
\newtheorem{theorem}[definition]{Theorem}
\newtheorem{lemma}[definition]{Lemma}
\newtheorem{corollary}[definition]{Corollary}
\newtheorem{proposition}[definition]{Proposition}
\begin{document}

\title{Quantum walks can find a marked element on any graph\footnote{A preliminary version of this work appeared in the \textit{Proceedings of the 37th International Colloquium on Automata, Languages and Programming}, volume 6198 of Lecture Notes in Computer Science, pages 540--551, Springer, 2010.}}

\newcommand{\s}[1]{\textsuperscript{#1}}

\author{
  Hari Krovi\footnote{
    Quantum Information Processing Group,
    Raytheon BBN Technologies, Cambridge,
    Massachusetts 02138, USA} \and
  Fr\'ed\'eric Magniez\footnote{
    CNRS, LIAFA, Univ Paris Diderot,
    Sorbonne Paris-Cit\'e, 75205 Paris, France} \and
  Maris Ozols\footnote{
    DAMTP, Centre for Mathematical Sciences,
    University of Cambridge,
    Cambridge CB3 0WA, UK} \and
  J\'er\'emie Roland\footnote{
    QuIC, Ecole Polytechnique de Bruxelles,
    Universit\'e Libre de Bruxelles (ULB),
    1050 Brussels, Belgium}
}

\date{January 30, 2014}

\maketitle

\begin{abstract}
We solve an open problem by constructing quantum walks that not only detect but also find marked vertices in a graph. In the case when the marked set $M$ consists of a single vertex, the number of steps of the quantum walk is quadratically smaller than the classical hitting time $\HT(P,M)$ of any reversible random walk $P$ on the graph. In the case of multiple marked elements, the number of steps is given in terms of a related quantity $\limHT$ which we call extended hitting time.

Our approach is new, simpler and more general than previous ones. We introduce a notion of interpolation between the random walk $P$ and the absorbing walk $P'$, whose marked states are absorbing. Then our quantum walk is simply the quantum analogue of this interpolation. Contrary to previous approaches, our results remain valid when the random walk $P$ is not state-transitive. We also provide algorithms in the cases when only approximations or bounds on parameters $p_M$ (the probability of picking a marked vertex from the stationary distribution) and $\limHT$ are known.
\end{abstract}

\setcounter{tocdepth}{4}
\tableofcontents

\section{Introduction}

Many randomized classical algorithms rely heavily on random walks or Markov chains. This technique has been extended to the quantum case and is called \emph{quantum walk}. Ambainis~\cite{Ambainis04} was the first to solve a natural problem---the element distinctness problem---using a quantum walk. Following this, many other quantum walk algorithms were discovered, for example,~\cite{MagniezSS05,BuhrmanS06,MagniezN05}.

A common class of problems that are typically solved using a random walk are the so-called \emph{spatial search problems}. In such problems, the displacement constraints are modelled by edges of an undirected graph $G$, which has some desired subset of vertices $M$ that are marked. The goal of a spatial search problem is to find one of the marked vertices by traversing the graph along its edges. Classically, a simple strategy for finding a marked vertex is to perform a random walk on $G$, by repeatedly applying some stochastic matrix $P$ until one of the marked vertices is reached (see \Sect{Classical hitting time} for more details). The expected running time of this algorithm is called the \emph{hitting time} of $P$ and is denoted by $\HT(P,M)$.

Quantum walk algorithms for the spatial search problem were studied in~\cite{AaronsonA05}. This problem has also been considered for several specific graphs, such as the hypercube~\cite{ShenviKW03} and the grid~\cite{childs2,AKR}. The notion of the hitting time has been carried over to the quantum case in~\cite{AKR,Kempe,Sze,KB,MNRS,MNRS2,VKB} by generalizing the classical notion in different ways. Usually, the quantum hitting time has a quadratic improvement over the classical one. However, 
several serious restrictions were imposed for this to be the case. A quantum algorithm could only solve the \emph{detection problem} of deciding whether there are marked vertices or not~\cite{Sze}, but for being able to \emph{find} them, the Markov chain had to be reversible, state-transitive, and with a unique marked vertex~\cite{Tulsi,MNRS2}. The detection algorithm is quite intuitive and well understood, whereas the finding algorithm requires an elaborate proof whose intuition is not clear. This is due in part to a modification of the quantum walk, so that the resulting walk is not a quantum analogue of a Markov chain anymore.

Whether this quadratic speed-up for finding a marked element also holds for all reversible Markov chains 
was an open question. We give a positive answer to this question by providing a quantum algorithm for solving this problem.
The case of multiple marked elements still remains open, because of a possible gap between the so-called extended hitting time $\limHT$, which characterizes the cost of our quantum algorithms, and the standard hitting time $\HT(P,M)$ (see \Sect{extended-hitting-time} and \Apx{HT comparison} for more details\footnote{Note that in the preliminary version of this work~\cite{KMOR}, a subtle error led to the wrong conclusion that $\limHT=\HT(P,M)$ for all $M$ and reversible $P$, while in general this only holds when $\abs{M}=1$.}).

\subsection{Related work}

Inspired by Ambainis' quantum walk algorithm for solving the element distinctness problem~\cite{Ambainis04}, Szegedy~\cite{Sze} has introduced a powerful way of constructing quantum analogues of Markov chains which led to new quantum walk-based algorithms. He showed that for any symmetric Markov chain a quantum walk could detect the presence of marked vertices in at most the square root of the classical hitting time. However, showing that a marked vertex could also be found in the same time (as is the case for the classical algorithm) proved to be a very difficult task. Magniez~\etal~\cite{MNRS} extended Szegedy's approach to the larger class of ergodic Markov chains, and proposed a quantum walk-based algorithm to find a marked vertex, but its complexity may be larger than the square root of the classical hitting time. A typical example where their approach fails to provide a quadratic speed-up is the 2D grid, where their algorithm has complexity $\Theta(n)$, whereas the classical hitting time is $\Theta(n \log n)$. Ambainis~\etal~\cite{AKR} and Szegedy's~\cite{Sze} approaches yield a complexity of $\Theta(\sqrt{n} \log n)$ in this special case, for a unique marked vertex. Childs and Goldstone~\cite{childs1,childs2} also obtained a similar result using a continuous-time quantum walk.

However, whether a full quadratic speed-up was possible in the 2D grid case remained an open question, until Tulsi~\cite{Tulsi} proposed a solution involving a new technique. Magniez~\etal~\cite{MNRS2} extended Tulsi's technique to any reversible state-transitive Markov chain, showing that for such chains, it is possible to find a unique marked vertex with a full quadratic speed-up over the classical hitting time. However, the state-transitivity is a strong symmetry condition, and furthermore their technique cannot deal with multiple marked vertices. Recently \cite{ABNOR} have suggested to modify the original~\cite{AKR} algorithm in the case of the 2D grid with a single marked element, by replacing amplitude amplification with classical search in a neighbourhood of the final vertex. This results in a $\sqrt{\log n}$ speed-up over the original algorithm from~\cite{AKR} and yields complexity $\Order(\sqrt{n \log n})$ as in the case of~\cite{Tulsi,MNRS2}.

It seems implausible that one has to rely on involved techniques to solve the finding problem under such restricted conditions in the quantum case, while the classical random walk algorithm (see \Sect{Classical hitting time}) is conceptually simple and works under general conditions. The classical algorithm simply applies \emph{absorbing} walk $P'$ obtained from $P$ by turning all outgoing transitions from marked states into self-loops (see \Apx{Semi-absorbing}). Each application of $P'$ results in more probability being absorbed in marked states.

Previous attempts at providing a quantum speed-up over this classical algorithm have followed one of these two approaches:
\begin{itemize}
  \item Combining a quantum version of $P$ with a reflection through marked vertices to mimic a Grover operation~\cite{AKR,Ambainis04,MNRS}.
  \item Directly applying a quantum version of $P'$~\cite{Sze,MNRS2}.
\end{itemize}
The problem with these approaches is that they would only be able to find marked vertices in very restricted cases. We explain this by the different nature of random and quantum walks: while both have a stable state, \ie, the stationary distribution for the random walk and the eigenstate with eigenvalue $1$ for the quantum walk, the way both walks act on other states is dramatically different.

Indeed, an ergodic random walk will converge to its stationary distribution from any initial distribution. This apparent robustness may be attributed to the inherent randomness of the walk, which will smooth out any initial perturbation. After many iterations of the walk, non-stationary contributions of the initial distribution will be damped and only the stationary distribution will survive (this can be attributed to the thermodynamical irreversibility\footnote{Reversibility of Markov chains (see \Apx{Reversibility}) is not related to thermodynamical reversibility. Actually, even a ``reversible" Markov chain is  thermodynamically irreversible.} of ergodic random walks).

On the other hand, this is not true for quantum walks, because in the absence of measurements a unitary evolution is deterministic (and in particular thermodynamically reversible): the contributions of the other eigenstates will not be damped but just oscillate with different frequencies, so that the overall evolution is quasi-periodic. As a consequence, while iterations of $P'$ always lead to a marked vertex, it may happen that iterations of the quantum analogue of $P'$ will never lead to a state with a large overlap over marked vertices, unless the walk exhibits a strong symmetry (as is the case for a state-transitive walk with only one marked element, which could be addressed by previous approaches).

\subsection{Our approach and contributions}

Our main result is that a quadratic speed-up for finding a marked element via quantum walk holds for any reversible Markov chain with a single marked element. We provide several algorithms for different versions of the problem. Compared to previous results, our algorithms are more general and conceptually clean. The intuition behind our main algorithm is based on the adiabatic algorithm from~\cite{KOR}. However, all algorithms presented here are circuit-based and thus do not suffer from the drawbacks of the adiabatic algorithm in~\cite{KOR}.

We choose an approach that is different from the ones described above: first, we directly modify the original random walk $P$, and then construct a quantum analogue of the modified walk. We choose the modified walk to be the interpolated Markov chain $P(s) = (1-s) P + s P'$ that interpolates between $P$ and the absorbing walk $P'$ whose outgoing transitions from marked vertices have been replaced by self-loops. Thus, we can still use our intuition from the classical case, but at the same time also get simpler proofs and more general results in the quantum case.

All of our quantum walk algorithms are based on eigenvalue estimation performed on operator $W(s)$, a quantum analogue of Markov chain $P(s)$. We consider the $(+1)$-eigenstate $\ket{\Psi_n(s)}$ of $W(s)$ that plays the role of the stationary distribution in the quantum case. We use the interpolation parameter $s$ to tune the length of projections of $\ket{\Psi_n(s)}$ onto marked and unmarked vertices. If both projections are large, our algorithm succeeds with large probability in $\Order \bigl( \! \sqrt{\limHT} \bigr)$ steps (\Thm{Search with known pm and HT}), where $\limHT$ is a quantity we call the extended hitting time (see \Def{HT(s)}, in particular, $\limHT=\HT(P,M)$ when $\abs{M}=1$).

We also provide several modifications of the main algorithm. In particular, we show how to make a suitable choice of $s$ to balance the overlap of $\ket{\Psi_n(s)}$ on marked and unmarked vertices even if some of the parameters required by the main algorithm are unknown and the rest are either approximately known (\Thm{Search with known HT} and \Thm{Search with unknown HT}) or bounded (\Thm{Search with bound on pmin and HTmax} and \Thm{Search with bound on HTmax}). In all cases a marked vertex is found in $\Order \bigl( \! \sqrt{\limHT} \bigr)$ steps.

In \Sect{Search preliminaries} we introduce several variations of the spatial search problem and provide preliminaries on random and quantum walks and their hitting times.
\Sect{Search algorithms} describes our quantum algorithms and contains the main results. The main algorithm is presented in \Sect{Main algorithm} and is followed by several modifications that execute the main algorithm many times with different parameters.

Technical and background material is provided in several appendices. In \Apx{Semi-absorbing} we describe basic properties of the interpolated Markov chain $P(s)$ and the extended hitting time $\limHT$, which is crucial for the analysis of the algorithms in \Sect{Search algorithms}. In \Apx{W(s)} we compute the spectrum of the walk operator $W(s)$ and show how it can be implemented for any $s$. In \Apx{HT comparison} we discuss limitations of our results for the case of multiple marked elements.

\section{Preliminaries} \label{sect:Search preliminaries}

\subsection{Classical random walks} \label{sect:Random walks}

A Markov chain\footnote{We will use terms ``random walk'', ``Markov chain'', and ``stochastic matrix'' interchangeably. The same holds for ``state'', ``vertex'', and ``element''.} on a discrete state space $X$ of size $n := \abs{X}$ is described by an $n \times n$ \emph{row-stochastic matrix} $P$ where $P_{xy} \in [0,1]$ is the transition probability from state $x$ to $y$ and
\begin{equation}
  \forall x \in X: \sum_{y \in X} P_{xy} = 1.
\end{equation}
Such Markov chain has a corresponding \emph{underlying directed graph} with $n$ vertices labelled by elements of $X$, and directed arcs labelled by \emph{non-zero} probabilities $P_{xy}$ (see \Fig{Graph}).

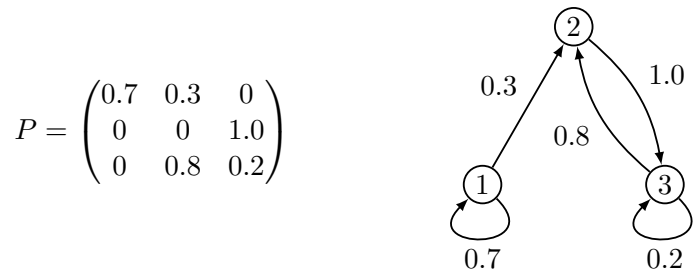
\begin{figure}[th]
  \centering

\def\dl{45} 

\begin{tikzpicture}
  [line width = 0.7pt,
   crc/.style = {circle, draw = black, minimum size = 5.0mm, inner sep = 1pt},
   arc/.style = {-latex},
   lop/.style = {loop below, min distance = 10mm, in = -180 + \dl, out = -\dl}]

  \def\r{1.4} 


  \path (-4*\r,0) node {$P =
    \mx{ 0.7 & 0.3 & 0   \\
         0   & 0   & 1.0 \\
         0   & 0.8 & 0.2 }$};


  \path (-150:\r) node [crc] (v1) {1};
  \path (  90:\r) node [crc] (v2) {2};
  \path (- 30:\r) node [crc] (v3) {3};


  \def\da{20} 

  \draw [arc] (v1) to node [label = 150:0.3] {} (v2);
  \draw [arc] (v2) to [out = -60 + \da, in = 120 - \da] node [label =  30:1.0] {} (v3);
  \draw [arc] (v3) to [out = 120 + \da, in = -60 - \da] node [label = 210:0.8] {} (v2);


  \draw [arc] (v1) to [lop] node {0.7} (v1);
  \draw [arc] (v3) to [lop] node {0.2} (v3);

\end{tikzpicture}

  \caption[Markov chain $P$ and the corresponding graph]{Markov chain $P$ and the corresponding graph with transition probabilities.}
  \label{fig:Graph}
\end{figure}

We represent probability distributions by \emph{row} vectors whose entries are real, non-negative, and sum to one. When one step of Markov chain $P$ is applied to a given distribution $p$, the resulting distribution is $pP$. A probability distribution $\pi$ that satisfies $\pi P = \pi$ is called a \emph{stationary distribution} of $P$. For more background on Markov chains see, \eg,~\cite{GrinsteadSnell,KemenySnell,KoralovSinai}.

\subsubsection{Ergodicity} \label{sect:Ergodicity}

Let us consider Markov chains with some extra structure.

\begin{definition}\label{def:Ergodicity}
A Markov chain is called
\begin{itemize}
\item \emph{irreducible}, if any state in the underlying directed graph can be reached from any other by a finite number of steps (\ie, the graph is strongly connected);
\item \emph{aperiodic}, if there is no integer $k > 1$ that divides the length of every directed cycle of the underlying directed graph;
\item \emph{ergodic\index{Markov chain!ergodic}}, if it is both irreducible and aperiodic.
\end{itemize}
\end{definition}

Equivalently, a Markov chain $P$ is ergodic if there exists some integer $k_0 \geq 1$ such that all entries of $P^{k_0}$ (and, in fact, of $P^k$ for any $k \geq k_0$) are strictly positive. Some authors call such chains \emph{regular} and use the term ``ergodic'' already for irreducible chains~\cite{GrinsteadSnell,KemenySnell}. From now on we will almost exclusively consider only ergodic Markov chains.

Even though some of the Markov chain properties in \Def{Ergodicity} are independent from each other (such as irreducibility and aperiodicity), usually they are imposed in a specific order which is summarized in \Fig{Hierarchy}. As we impose more conditions, more can be said about the spectrum of $P$ as discussed in the next section.

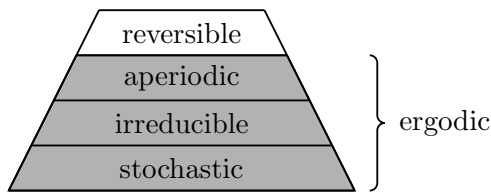
\begin{figure}[th]
  \centering

\begin{tikzpicture}[line width = 0.7pt]

  \def\w{2.3} 
  \def\h{0.6} 
  \def\c{0.5} 

  \draw [fill = black!30]
        (+\w,0) --
        (-\w,0) --
        (-\w+3*\h*\c,3*\h) --
        (+\w-3*\h*\c,3*\h) -- cycle;

  \draw (-\w,0) -- (-\w+4*\h*\c,4*\h);
  \draw (+\w,0) -- (+\w-4*\h*\c,4*\h);

  \foreach \i/\txt in {0/stochastic, 1/irreducible, 2/aperiodic, 3/reversible, 4/} {
    \draw (-\w+\i*\h*\c,\i*\h) --
          (+\w-\i*\h*\c,\i*\h);
    \draw (0,{(\i+1/2)*\h}) node {\txt};
  }

  \def\r{0.1} 

  \draw (\w+2*\r,0)    arc(-90:0:\r) to +(0, 1.5*\h-2*\r) arc(180: 90:\r);
  \draw (\w+2*\r,3*\h) arc( 90:0:\r) to +(0,-1.5*\h+2*\r) arc(180:270:\r);

  \draw (\w+1.2,3*\h/2) node {ergodic};

\end{tikzpicture}

  \caption[Summary of Markov chain properties]{The order in which Markov chain properties from~\Def{Ergodicity} are typically imposed. Reversibility will be defined in \Apx{Reversibility}.}
  \label{fig:Hierarchy}
\end{figure}

\subsubsection{Perron--Frobenius theorem}

The following theorem will be very useful for us. It is essentially the standard Perron--Frobenius theorem~\cite[Theorem~8.4.4, p.~508]{HornJohnson}, but adapted for Markov chains. (This theorem is also known as the ``Ergodic Theorem for Markov chains'' \cite[Theorem~5.9, p.~72]{KoralovSinai}.) The version presented here is based on the extensive overview of Perron--Frobenius theory in~\cite[Chapter~8]{Meyer}.

\begin{theorem}[Perron--Frobenius]\label{thm:PF}
Let $P$ be a stochastic matrix. Then
\begin{itemize}
  \item all eigenvalues of $P$ are at most $1$ in absolute value and $1$ is an eigenvalue of $P$;
  \item if $P$ is irreducible, then the $1$-eigenvector is unique and strictly positive (\ie, it is of the form $c \pi$, where $c \neq 0$ and $\pi$ is a probability distribution that is non-zero everywhere);
  \item if in addition to being irreducible, $P$ is also aperiodic (\ie, $P$ is ergodic), then the remaining eigenvalues of $P$ are strictly smaller than $1$ in absolute value.
\end{itemize}
\end{theorem}

If $P$ is irreducible but not aperiodic, it has some complex eigenvalues on the unit circle (which can be shown to be roots of unity)~\cite[Chapter~8]{Meyer}. However, when in addition we also impose aperiodicity (and hence ergodicity), we are guaranteed that there is a unique eigenvalue of absolute value $1$ and, in fact, it is equal to $1$.

\subsection{Spatial search on graphs}

We fix an undirected graph $G = (X, E)$ with $n := \abs{X}$ vertices and a set of edges $E$. Let $M \subseteq X$ be a set of marked vertices of size $\m := \abs{M}$. We insist that during the traversing of the graph the current vertex is stored in a distinguished \emph{vertex register}. Our goal is to find any of the marked vertices in $M$ using only evolutions that preserve the locality of $G$ on the vertex register, \ie, to perform a \emph{spatial search} on $G$~\cite{AaronsonA05} (here we define an even more restricted notion of locality than the ones in~\cite{AaronsonA05}, but it is more intuitive and sufficiently powerful for our purpose).

We allow two types of operations on the vertex register:
\begin{itemize}
  \item \emph{static transformations}, that can be conditioned on the state of the vertex register, but do not modify it;
  \item \shift{}, that exchanges the value of the vertex register and another register.
\end{itemize}
To impose locality, we want to restrict the execution of \shift{} only to the edges of $G$.

\begin{definition}[$\shift$ operation]\label{def:Shift}
Let
\begin{equation}
  \shift: (x,y) \mapsto
  \begin{cases}
    (y,x), & \text{if } (x,y) \in E, \\
    (x,y), & \text{otherwise.}
  \end{cases}
  \label{eq:Shift}
\end{equation}
In the first case we say that $\shift$ \emph{succeeds}, but in the second case it \emph{fails} (we assume that $\shift$ always succeeds if $x = y$).
\end{definition}

\begin{definition}[Search problems]
Under the restriction that only static transformations and $\shift$ are allowed, consider the following problems:
\begin{itemize}
  \item $\detect(G)$: Detect if there is a marked vertex in $G$;
  \item $\find(G)$: Find any marked vertex in $G$, with the promise that $M \neq \emptyset$.
\end{itemize}
We also define the following variations of the above problems:
\begin{itemize}
  \item $\detect^{(k)}(G)$: problem $\detect(G)$ with the promise that either $\m = 0$ or $\m = k$;
  \item $\find^{(k)}(G)$: problem $\find(G)$ with the promise that $\m = k$.
\end{itemize}
Similarly, let $\detect^{(\geq k)}(G)$ and $\find^{(\geq k)}(G)$ denote the corresponding problems with equality $m = k$ replaced by inequality $\m \geq k$.
\end{definition}

Note that an algorithm for $\find$ (or its variations) should output a marked element, but there is no additional constraint on its output. Our quantum algorithms will solve a slightly stronger version of $\find$, which we call $\sample$, where it is necessary to sample marked elements from a specific distribution (see \Sect{extended-hitting-time}).

\subsection{Search via random walk}

A natural approach to searching on a graph involves using a random walk. Intuitively, a random walk is an alternation of coin flips and shifts. More precisely, a coin is flipped according to the current state $x \in X$ of the vertex register, its value describes the target vertex $y$, and \shift{} performs a move from $x$ to $y$. Let $P_{xy}$ be the probability that $x$ is shifted to $y$. Then \shift{} always succeeds if $P_{xy} = 0$ whenever $(x,y) \notin E$. In such case, we say that $P = (P_{xy})_{x, y \in X}$ is a \emph{Markov chain on graph $G$}.

We assume from now on that $P$ is an ergodic Markov chain (see \Def{Ergodicity}). Therefore, by the \PFT{}, $P$ has a unique stationary distribution $\pi$. We also assume that $P$ is reversible: $\pi_x P_{xy} = \pi_y P_{yx}$, for all $x, y \in X$ (see \Def{Reversibility}).

To measure the complexity of implementing a random walk corresponding to $P$, we introduce the following black-box operations:
\begin{itemize}
  \item $\check(M)$: check if a given vertex is marked;
  \item $\setup(P)$: draw a sample from the stationary distribution $\pi$ of $P$;
  \item $\update(P)$: perform one step of $P$.
\end{itemize}
Each of these black-box operations have the corresponding associated implementation cost, which we denote by $\checkingcost$, $\setupcost$, and $\updatecost$, respectively.

\subsection{Search via quantum walk} \label{sect:Quantum walk preliminaries}

The setup in the quantum case is as follows. As in~\cite{KOR}, the evolution takes place in space $\Hi \x \Hi$ where $\Hi := \spn \set{\ket{x} : x \in X}$ is the $n$-dimensional complex Euclidean space spanned by elements of set $X$. The first register stores the current vertex of the walk and is called \emph{vertex register}. We call a unitary transformation \emph{static} if it is controlled by this register, \ie{}, it is of the form $\sum_{x \in X} \proj{x} \x U_x$ for some unitaries $U_x$. The quantum version of the $\shift$ operation is obtained by extending the expression in \Eq{Shift} by linearity.

A \emph{quantum walk} on $G$ is a composition of static unitary transformations and \shift{}. In addition, we require that it respects the local structure of $G$, \ie{}, whenever \shift{} is applied to a state, the state must completely lie within the subspace of $\Hi \x \Hi$ where \shift{} is guaranteed to succeed.

We will only consider quantum walks built from quantum analogues of reversible Markov chains, so we extend the operations $\check$, $\setup$, and $\update$ to the quantum setting as follows:
\begin{itemize}
  \item $\check(M)$: map $\ket{x} \ket{b}$ to $\ket{x} \ket{b}$ if $x \notin M$ and $\ket{x} \ket{b \oplus 1}$ if $x \in M$, where $\ket{x}$ is the vertex register and $b \in \set{0,1}$;
  \item $\setup(P)$: construct the superposition $\ket{\pi} := \sum_{x \in X} \sqrt{\pi_x} \ket{x}$;
  \item $\update(P)$: apply any of $V(P)$, $V(P)\ct$, or $\shift$, where
$V(P)$ is a unitary operation that satisfies
\begin{equation}
  V(P) \ket{x} \ancilla
  := \ket{x} \ket{p_x}
  := \ket{x} \sum_{y \in X} \sqrt{P_{xy}} \ket{y}
  \label{eq:V(P)}
\end{equation}
for all $x \in X$ and some fixed reference state $\ancilla \in \Hi$.
\end{itemize}
Implicitly, we also allow any controlled version of $\check(M)$, $\setup(P)$, and $\update(P)$, on which we access via oracle.

In terms of the number of applications of $\shift$, $\update$ has complexity $1$ while $\setup$ has complexity equal to the diameter of graph $G$. Nonetheless, in many algorithmic applications, the situation is more complex and the number of applications of $\shift$ is not the only relevant cost; see for instance~\cite{Ambainis04,MagniezSS05}.

To define a quantum analogue of a reversible Markov chain $P$, we follow the construction of Szegedy~\cite{Sze}. Let $\X : = \Hi \x \spn \set{\ancilla} = \spn \set{\ket{x} \ancilla : x \in X}$ and
\begin{equation}
  \reflex_\X
  := 2 \sum_{x \in X} \proj{x} \x \proj{\bar{0}} - I \x I
   = I \x (2 \proj{\bar{0}} - I)
  \label{eq:refX}
\end{equation}
be the reflection in $\Hi \x \Hi$ with respect to the subspace $\X$. The \emph{quantum walk operator} corresponding to Markov chain $P$ is\footnote{In~\cite{Sze} the quantum walk operator corresponding to $P$ is defined as $\bigl( V(P) \, W(P) \, V(P)^\dagger \bigr)^2$ where $W(P)$ is defined in \Eq{W(P)}.}
\begin{equation}
  W(P) := V(P)\ct \, \shift \, V(P) \cdot \reflex_{\X}.
  \label{eq:W(P)}
\end{equation}
Notice that $W(P)$ requires $3$ calls to $\update(P)$.

Since we always choose an initial state that lies in subspace $\X$, we can simplify the analysis by restricting the action of $W(P)$ to the smallest subspace that contains $\X$ and is invariant under $W(P)$. We call this subspace the \emph{walk space} of $W(P)$. We show in \Apx{W(s)} that this subspace is spanned by $\X$ and $W(P) \X$, and that $\shift$ is guaranteed to succeed when $W(P)$ is applied to a state in the walk space.

\subsection{Classical hitting time} \label{sect:Classical hitting time}

We define the hitting time of $P$ based on a simple classical random walk algorithm for finding a marked element in the state space $X$.

\begin{definition}\label{def:HT}
Let $P$ be an ergodic Markov chain, and $M$ be a set of marked states. The \emph{hitting time} of $P$ with respect to $M$, denoted by $\HT(P,M)$, is the expected number of executions of the last step of the \RWA{}, conditioned on the initial vertex being unmarked.
\begin{algobox}{0.9}
\RWAtxt{}\label{alg:RWA}
\begin{enumerate}
 \item Generate $x \in X$ according to the stationary distribution $\pi$ of $P$ using $\setup(P)$.
 \item Check if $x$ is marked using $\check(M)$. If $x$ is marked, output $x$ and exit.\label{step:check}
 \item Otherwise, update $x$ according to $P$ using $\update(P)$ and go back to \step{check}.\label{step:walk}
\end{enumerate}
\end{algobox}
\end{definition}

It is straightforward to bound the classical complexity of the $\detect$ and $\find$ problems in terms of the hitting time.

\begin{proposition}\label{prop:random-walk-search}
Let $k \geq 1$. $\detect^{(\geq k)}(G)$ can be solved with high probability and classical complexity of order
\begin{equation}
  \setupcost + T \cdot (\updatecost+\checkingcost),
  \quad\textrm{where}\quad
  T = \max_{\abs{M'}=k} \HT(P,M').
\end{equation}
$\find(G)$ can be solved with high probability and expected classical complexity of order
\begin{equation}
  \setupcost + T \cdot (\updatecost+\checkingcost),
  \quad\textrm{where}\quad
  T = \HT(P,M).
\end{equation}
\end{proposition}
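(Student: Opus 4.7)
The plan is to analyze the natural random-walk search procedure and reduce both statements to standard facts about classical hitting times. The algorithm I would consider is: sample an initial vertex from the stationary distribution $\pi$ via one call to $\setup(P)$; then iterate \emph{check--update}, i.e., apply $\check(M)$ to the current vertex, return it as a witness if it is marked, otherwise apply one step of $\update(P)$. For $\find(G)$ the loop runs until a marked vertex is found; for $\detect^{(\geq k)}(G)$ the loop is truncated after $T$ iterations, and the algorithm outputs ``no marked vertex'' if no marked vertex was observed.

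For $\find(G)$, correctness is immediate, and by definition of the hitting time the expected number of iterations is $\HT(P,M)$. Since each iteration costs $\updatecost + \checkingcost$ and the setup contributes $\setupcost$, the expected complexity is as stated.

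For $\detect^{(\geq k)}(G)$, if $M = \emptyset$ the algorithm never declares detection. If $|M| \geq k$, the key step is the monotonicity of the hitting time under enlarging the set of marked vertices: for any $M' \subseteq M$, $\HT(P,M) \leq \HT(P,M')$. This can be justified by a coupling argument, running two copies of the walk with the same coin flips and observing that the one with the larger marked set is absorbed no later. Hence $\HT(P,M) \leq \max_{|M'|=k} \HT(P,M') = T$. Taking the cutoff to be a constant multiple of $T$ and applying Markov's inequality yields constant success probability per run, and $\Order(1)$ independent repetitions boost this to high probability while inflating the cost only by a constant factor.

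The main (very mild) obstacle is the monotonicity statement; once that is in hand, the rest is bookkeeping on setup, update, and check calls. A minor point to be careful about is that the promise $m \geq k$, rather than $m = k$, is needed precisely because of this monotonicity --- the worst case for the hitting time among sets of size at least $k$ is attained at size exactly $k$, which is why $T$ is defined as the maximum of $\HT(P,M')$ over $|M'| = k$.
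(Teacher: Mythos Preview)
The paper states this proposition without proof, treating it as a standard classical fact about random-walk search (the source even contains a commented-out \texttt{RandomWalkSearch} algorithm identical to yours). Your argument is correct and is precisely the standard one: the definition of $\HT(P,M)$ gives the expected-cost bound for $\find(G)$, and monotonicity of hitting time in $M$ together with Markov's inequality handles $\detect^{(\geq k)}(G)$.
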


Note that since the \RWA{} consists in applying the random walk $P$ until hitting a marked vertex, it may be seen as repeated applications of the \emph{absorbing} walk $P'$.
\begin{definition}
 Let $P$ be an ergodic Markov chain, and $M$ be a set of marked states. The \emph{absorbing} walk $P'$ is the walk obtained from $P$ by replacing all outgoing transitions from marked vertices by self-loops, that is $P'_{xy}=P_{xy}$ for all $x\notin M$, and $P'_{xy}=\delta_{xy}$ for all $x\in M$ ($\delta_{xy}$ being the Kronecker delta).
\end{definition}

The hitting time $\HT(P,M)$ may be obtained from the spectral properties of the \emph{discriminant matrix} of $P'$, which was introduced by Szegedy in~\cite{Sze,Sze-arXiv}.
\begin{definition}\label{def:Discriminant}
 The \emph{discriminant matrix} $D(P)$ of a Markov chain $P$ is
\begin{equation}
  D(P) := \sqrt{P \circ P\tp},
  \label{eq:Discriminant}
\end{equation}
where the Hadamard product ``$\circ$'' and the square root are computed entry-wise. 
\end{definition}

\begin{restatable}{proposition}{hittingtime}\label{prop:HT(P,M)}
The hitting time of Markov chain $P$ with respect to marked set $M$ is given by
\begin{equation}
  \HT(P,M) = \sum_{k=1}^{n-\abs{M}}\frac{\abs{\braket{v_k'}{U}}^2}{1-\lambda'_k},
  \label{eq:HT(P,M)}
\end{equation}
where $\lambda'_k$ are the eigenvalues of the discriminant matrix $D'=D(P')$ in nondecreasing order, $\ket{v_k'}$ are the corresponding eigenvectors, and $\ket{U}$ is the unit vector
\begin{equation*}
 \ket{U}:=\frac{1}{\sqrt{1-p_M}}\sum_{x\notin M}\sqrt{\pi_x}\ket{x},
\end{equation*}
$p_M$ being the probability to draw a marked vertex from the stationary distribution $\pi$ of $P$.
\end{restatable}
This proposition is proved in \Apx{HT}.

\subsection{Quantum hitting time}

Quantum walks have been successfully used for detecting the presence of marked vertices quadratically faster than random walks~\cite{Sze}. Nonetheless, very little is known about the problem of finding a marked vertex. Below, we describe the current understanding of this problem.

\begin{theorem}[\cite{Sze}]\label{thm:Detection}
Let $k\geq 1$. $\detect^{(\geq k)}(G)$ can be solved with high probability and quantum complexity of order
\begin{equation}
  \setupcost + T \cdot (\updatecost+\checkingcost),
  \quad\textrm{where}\quad
  T = \max_{\abs{M'}=k} \sqrt{\HT(P,M')}.
\end{equation}
\end{theorem}

When $P$ is state-transitive and there is a unique marked vertex $z$ (\ie, $\m = 1$), $\HT(P,\{z\})$ is independent of $z$ and one can also find $z$:

\begin{theorem}[\cite{Tulsi,MNRS2}]\label{thm:Finding}
Assume that $P$ is state-transitive. $\find^{(1)}(G)$ can be solved with high probability and quantum complexity of order
\begin{equation}
  \setupcost + T \cdot (\updatecost+\checkingcost),
  \quad\textrm{where}\quad
  T = \sqrt{\HT(P,\set{z})}.
\end{equation}
\end{theorem}

Using standard techniques, such as in~\cite{AaronsonA05}, \Thm{Finding} can be generalized to any number of marked vertices, with an extra logarithmic multiplicative factor. Nonetheless, the complexities of the corresponding algorithms do not decrease when the size of $M$ increases, contrary to the random walk search algorithm (\Prop{random-walk-search}) and the quantum walk detecting algorithm (\Thm{Detection}).

\begin{corollary}\label{cor:tulsimulti}
Assume that $P$ is state-transitive. $\find(G)$ can be solved with high probability and quantum complexity of order
\begin{equation}
  \log(n) \cdot \bigl( \setupcost + T \cdot (\updatecost + \checkingcost) \bigr),
  \quad\textrm{where}\quad
  T = \sqrt{\HT(P,\set{z})}, \text{ for any $z$.}
\end{equation}
\end{corollary}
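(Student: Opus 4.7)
The plan is to reduce the general $\find(G)$ problem to the single-marked-vertex case handled by Theorem~\ref{finding}, via a standard Aaronson--Ambainis style subsampling over the unknown size $m = \abs{M}$. For each guess $k \in \set{1, 2, 4, \ldots, n}$ (there are $O(\log n)$ of them), I would pick a uniformly random subset $S \subseteq X$ of size roughly $n/k$ and invoke the algorithm of Theorem~\ref{finding} with the checking oracle replaced by one that accepts a vertex $x$ only when $x \in M \cap S$. A standard counting argument shows that whenever $k/2 \le m \le k$, the restricted marked set $M' := M \cap S$ satisfies $\abs{M'} = 1$ with constant probability. In that event, writing $M' = \set{z'}$, state-transitivity of $P$ yields $\HT(P, \set{z'}) = \HT(P, \set{z})$ for any fixed $z$, so Theorem~\ref{finding} returns $z'$ with constant probability at cost $\setupcost + T \cdot (\updatecost + \checkingcost)$, where $T = \sqrt{\HT(P, \set{z})}$.

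A single application of $\check(M)$ to the returned vertex confirms whether it is actually marked, so spurious outputs arising from the ``wrong'' guesses are harmless and one can simply iterate over all $k$ and stop at the first confirmed marked vertex. Summing the cost over all $O(\log n)$ guesses gives the claimed complexity. Since at least one guess satisfies $k/2 \le m \le k$ for any $m \ge 1$, and on that guess the procedure succeeds with constant probability, the overall algorithm succeeds with constant probability, which can be amplified to the desired ``high probability'' regime by a constant number of outer repetitions without changing the asymptotic bound.

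The only subtlety is to implement the restricted check for $M' = M \cap S$ at the same asymptotic cost as $\check(M)$: since $S$ is sampled classically and frozen before the quantum walk starts, membership in $S$ can be tested by a single comparison against a stored indicator, so $\check(M')$ costs $O(\checkingcost)$ per invocation, and the setup and update oracles for $P$ are unchanged. Given this, the remainder of the argument is essentially a black-box invocation of Theorem~\ref{finding}, and I anticipate no genuine technical obstacle beyond bookkeeping---the logarithmic multiplicative overhead is exactly the price of the geometric search over the unknown value of $m$.
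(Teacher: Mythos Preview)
Your proposal is correct and matches the paper's approach: the paper does not spell out a proof but simply states that the corollary follows from Theorem~\ref{finding} via ``standard techniques, such as in~\cite{AaronsonA05},'' i.e., precisely the Aaronson--Ambainis subsampling reduction you describe. The geometric guessing of $m$, the random restriction to isolate a single marked vertex, and the appeal to state-transitivity to make $\HT(P,\{z'\})$ independent of $z'$ are exactly the intended ingredients, and your bookkeeping of the $\Order(\log n)$ overhead is accurate.
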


\subsection{Extended hitting time} \label{sect:extended-hitting-time}

The quantum algorithms leading to the results in the previous subsection are based on quantum analogues of either the Markov chain $P$ or the corresponding absorbing walk $P'$. However, the algorithms proposed in the present article are based on a quantum analogue of the following \emph{interpolated} Markov chain.
\begin{definition}
 Let $P$ be a Markov chain, $M$ be a set of marked elements and $P'$ be the corresponding absorbing walk. We define the \emph{interpolated} Markov chain $P(s)$ as
\begin{align*}
 P(s):=(1-s)P+sP', \quad 0\leq s\leq 1.
\end{align*}
We also denote by $D(s)$ the discriminant matrix $D(P(s))$, by $\lambda_k(s)$ (for $k\in[n]$) its eigenvalues (in nondecreasing order) and by $\ket{v_k(s)}$ (for $k\in[n]$) its corresponding eigenvectors.
\end{definition}
Some properties of $P(s)$ are proven in \Apx{Semi-absorbing basics}, in particular, we note that $P(s)$ is ergodic for any $0\leq s<1$ as soon as $P$ is (\Prop{Ergodicity}). Moreover, just as $P(s)$ interpolates between $P$ and $P'$, the stationary distribution $\pi(s)$ of $P(s)$ interpolates between the stationary distribution $\pi$ of $P$ and its restriction to the set of marked vertices, i.e. a stationary distribution for $P'$ (\Prop{Stationary}).

This implies that $P(s)$ may be used to solve the following strong version of the $\find$ problem.

\begin{definition}[Sampling problem]
Let $P$ be an ergodic Markov chain on graph $G$. Under the restriction that only static transformations and $\shift$ are allowed, consider the following problems:
\begin{itemize}
  \item $\sample(P)$: Sample marked vertices in $G$ according to the restriction to set $M$ of the stationary distribution of $P$, with the promise that $M \neq \emptyset$.
  \item $\sample^{(k)}(P)$: problem $\sample(P)$ with the promise that $\m = k$.
\end{itemize}
\end{definition}

Indeed, since the stationary distribution of $P(s)$ precisely interpolates between $\pi$ and its restriction to $M$, we can solve the $\sample$ problem by applying Markov chain $P(s)$ for a sufficient number of steps $t$ to approach its stationary distribution, then outputting the current vertex if it is marked, otherwise starting over.

Our new quantum algorithms can be seen as quantum analogues of this classical algorithm, and their cost will be expressed in terms of a quantity which we call the \emph{extended} hitting time.

\begin{restatable}{definition}{deflimHT}\label{def:HT(s)}
The \emph{extended hitting time} of $P$ with respect to $M$ is
\begin{equation}
  \limHT := \lim_{s \to 1} \HT(s),
  \label{eq:HT lim}
\end{equation}
where the \emph{interpolated hitting time} $\HT(s)$ is defined for any $s\in[0,1)$\footnote{Note that in the case of multiple marked elements this expression cannot be used for $s = 1$, since the numerator and denominator vanish for terms with $k > n - \abs{M}$. We analyze the $s \to 1$ limit in \Apx{HT comparison}.} as
\begin{equation}
  \HT(s) := \sum_{k=1}^{n-1} \frac{\abs{\braket{v_k(s)}{U}}^2}{1-\lambda_k(s)}.
  \label{eq:HT(s) series}
\end{equation}
\end{restatable}

The name \emph{extended hitting time} is justified by comparing \Eq{HT(s) series} to \Eq{HT(P,M)}, and noting that $\braket{v_k'}{U}=0$ for $k>n-\abs{M}$. In general, the extended hitting time $\limHT$ can be larger than the hitting time $\HT(P,M)$, but they happen to be equal in the case of a single marked element. This implies that when $\abs{M}=1$, the cost of our quantum algorithms can be expressed in terms of the usual hitting time, which might be attributed to the fact that the $\sample$ problem is equivalent to the usual $\find$ problem in that case.

\begin{restatable}{proposition}{htcontinuity}\label{prop:Continuity}
If $\abs{M} = 1$ then $\limHT = \HT(P,M)$. However, there exists $P$ and $\abs{M} > 1$ such that $\limHT > \HT(P,M)$.
\end{restatable}
This proposition is proved in \Apx{HT(s)}. An alternative expression for $\limHT$ is provided in \Apx{HT comparison}; it allows for an easier comparison with $\HT(P,M)$. The following theorem holds for any number of marked elements and it relates $\HT(s)$ to $\limHT$.
\begin{restatable}{theorem}{HTintermsoflimHT}\label{thm:HT}
For $s < 1$, the interpolated hitting time $\HT(s)$ is related to $\limHT$ from \Eq{HT lim} as follows:
\begin{equation}
  \HT(s) = \frac{p_M^2}{(1-s(1-p_M))^2} \limHT
  \label{eq:HT(s)}
\end{equation}
where $p_M$ is the probability to pick a marked state from the stationary distribution $\pi$ of $P$. When $\abs{M} = 1$, $\limHT$ in \Eq{HT(s)} can be replaced by $\HT(P,M)$.
\end{restatable}

The proof is provided in \Apx{relation between HT(s) and HT(1)}.

\section{Quantum search algorithms} \label{sect:Search algorithms}

In this section we provide several quantum search algorithms. They are all based on a procedure known as \emph{eigenvalue estimation} and essentially run it different numbers of times with different values of parameters. Here is a formal statement of what eigenvalue estimation does.

\begin{theorem}[Eigenvalue estimation~\cite{Kitaev95,CEMM98}]\label{thm:eigenvalue-estimation}\label{alg:QEE}
For any unitary operator $W$ and precision $t \in \N$, there exists a quantum circuit \QEE$(W,t)$ that uses $2^t$ calls to the controlled-$W$ operator and $\Order(t^2)$ additional gates, and acts on eigenstates $\ket{\Psi_k}$ of $W$ as
\begin{equation}
  \ket{\Psi_k} \mapsto
  \ket{\Psi_k} \frac{1}{2^t}
  \sum_{l,m=0}^{2^t-1}
  e^{-\frac{2 \pi i l m}{2^t}}
  e^{i \varphi_k l} \ket{m},
  \label{eq:QEE}
\end{equation}
where $e^{i \varphi_k}$ is the eigenvalue of $W$ corresponding to $\ket{\Psi_k}$.
\end{theorem}

By linearity, \QEE$(W,t)$ resolves any state as a linear combination of the eigenstates of $W$ and attaches to each term a second register holding an approximation of the first $t$ bits of the binary decomposition of $\frac{1}{2\pi} \varphi_k$, where $\varphi_k$ is the phase of the corresponding eigenvalue. We will mostly be interested in the component along the eigenvector $\ket{\Psi_n}$ which corresponds to phase $\varphi_n = 0$. In that case, the second register is in the state $\ket{0^t}$ and the estimation is exact.

Our search algorithms will be based on \QEE$(W(s),t)$ for some values of parameters $s$ and $t$. Here, $W(s):=W(P(s))$ is the quantum analogue of the interpolated Markov chain $P(s)$, following Szegedy's construction as described in \Sect{Quantum walk preliminaries} (a quantum circuit implementing $W(s)$ is also provided by \Lem{Update} in \Apx{Circuit for W(s)}). The value of the interpolation parameter $s \in [0,1]$ will be related to $p_M$, the probability to pick a marked vertex from the stationary distribution $\pi$ of $P$. Precision $t \in \N$, or the number of binary digits in eigenvalue estimation, will be related to $\limHT$, the extended hitting time of $P$.

We consider several scenarios where different knowledge of the values of parameters $p_M$ and $\limHT$ is available, and for each case we provide an algorithm. The list of all results and the corresponding assumptions is given in \Tab{Search algorithms}.

Throughout the rest of this section we assume that all eigenvalues of $P$ are between $0$ and $1$. If this is not the case, we can guarantee it by making $P$ ``lazy'', which affects the hitting time only by a constant factor (see \Prop{Lazy}).

\newcommand{\known}{known}
\newcommand{\apprx}{approximation known}
\newcommand{\noinf}{not known}
\newcommand{\bound}{bound known}

\begin{table}
\begin{center}
\begin{tabular}{c|r|r}
Result & \multicolumn{1}{c|}{$p_M$} & \multicolumn{1}{c}{$\limHT$} \\
\hline
\Thm{Search with known pm and HT} & \known & \known \\
\Thm{Search with known HT}        & \apprx & \known \\
\Thm{Search with unknown HT}      & \apprx & \noinf \\
\Thm{Search with bound on pmin and HTmax}                   & \bound & \bound \\
\Thm{Search with bound on HTmax}  & \noinf & \bound \\
\end{tabular}
\end{center}
\caption[Summary of results on quantum search algorithms]{Summary of results on quantum search algorithms. Assumptions on $p_M$ and $\limHT$ are listed in the last two columns.}
\label{tab:Search algorithms}
\end{table}

\subsection{Algorithm with known values of
\texorpdfstring{$p_M$ and $\limHT$}{pM and HT(P,M)}} \label{sect:Main algorithm}

For simplicity, let us first assume that the values of $p_M$ and $\limHT$ are known. In this case we provide a quantum algorithm that solves $\find(G)$ (\ie, outputs a marked vertex if there is any) with success probability and running time that depends on two parameters $\varepsilon_1$ and $\varepsilon_2$.

Let us first recall how the classical \RWA{} from \Sect{Classical hitting time} works. It starts with the stationary distribution $\pi$ of $P$ and applies the absorbing walk $P'$ until most of the probability is absorbed in marked vertices and thus the state is close to a stationary distribution of $P'$.

In the quantum case a natural starting state is $\ket{\pi} \ancilla = \ket{v_n(0)} \ancilla$, which is a stationary state $W(P)$ (see \Lem{W(s) spectrum} below). By analogy, we would like to end up in its projection onto marked vertices, namely $\ket{M} \ancilla$, where
\begin{equation*}
\ket{M}:=\frac{1}{\sqrt{p_M}}\sum_{x\in M}\ket{x}, 
\end{equation*}
which is also a stationary state for $W(P')$. However, at this point the analogy breaks down, since we do not want to apply $W(P')$ to reach the final state. The reason is that in many cases, including the 2D grid, every iteration of $W(P')$ on $\ket{\pi} \ancilla$ may remain far from $\ket{M} \ancilla$. Instead, our approach consists of quantizing a new random walk, namely an interpolation $P(s)$ between $P$ and $P'$. This technique is drastically different from the approach of~\cite{Tulsi,MNRS2} and, to our knowledge, new.

\begin{figure}[th]
  \centering

\begin{tikzpicture}
  [line width = 0.7pt,
   arc/.style = {-latex}]

  \def\a{28} 
  \def\r{3.0} 

  \draw [arc] (0,0) to (\r, 0); 
  \draw [arc] (0,0) to (0, \r); 
  \draw [arc] (0,0) to (\a:\r); 

  \draw[dashed] (\a:\r) -- ($(0,0)!(\a:\r)!(\r,0)$);
  \draw[dashed] (\a:\r) -- ($(0,0)!(\a:\r)!(0,\r)$);

  \def\c{0.15} 

  \draw (\c*\r,0) to [out = 90, in = -90+\a] (\a:\c*\r);
  \draw (\a/2:0.5*\r)+(-0.12,-0.09) node {$\theta(s)$};

  \def\z{1.15} 

  \draw (\z*\r, 0) node {$\ket{U}$};
  \draw (0, \z*\r) node {$\ket{M}$};
  \draw (\a:\z*\r)+(0,0.2) node {$\ket{v_n(s)}$};

  \draw ($(0,0)!(\a:\r/2)!(\r,0)$)+(0,-0.5) node {$\cos \theta(s) \geq \sqrt{\varepsilon_1}$};
  \draw ($(0,0)!(\a:\r/2)!(0,\r)$)+(-1.7,0) node {$\sin \theta(s) \geq \sqrt{\varepsilon_1}$};

\end{tikzpicture}

  \caption[Vectors $\ket{U}$, $\ket{M}$, and $\ket{v_n(s)}$]{Vectors $\ket{U}$, $\ket{M}$, and $\ket{v_n(s)} = \cos \theta(s) \ket{U} + \sin \theta(s) \ket{M}$. We want to choose $s$ so that $\braket{U}{v_n(s)} = \cos \theta(s) \geq \sqrt{\varepsilon_1}$ and $\braket{M}{v_n(s)} = \sin \theta(s) \geq \sqrt{\varepsilon_1}$.}
  \label{fig:UM}
\end{figure}

Intuitively, our quantum algorithm works as follows. We fix some value of $s \in [0,1]$ and map $\ket{U}$ to $\ket{v_n(s)}$ using a quantum walk based on $P(s)$, and then measure $\ket{v_n(s)}$ in the standard basis to get a marked vertex. For this to work with a good probability of success, we have to choose the interpolation parameter $s$ so that $\ket{v_n(s)}$ has a large overlap on both $\ket{U}$ and $\ket{M}$ (see \Fig{UM}). In that context, the following proposition, proved in \Apx{Rotation}, will be useful.

\begin{restatable}{proposition}{propvn}\label{prop:vn(s)}
$\ket{v_n(s)} = \cos \theta(s) \ket{U} + \sin \theta(s) \ket{M}$ where
\begin{align}
  \cos \theta(s) &= \sqrt{\frac{(1-s)(1-p_M)}{1-s(1-p_M)}}, &
  \sin \theta(s) &= \sqrt{\frac{p_M}{1-s(1-p_M)}}.
  \label{eq:cos and sin theta}
\end{align}
\end{restatable}

Therefore, for $\ket{v_n(s)}$ to have a large overlap on both $\ket{U}$ and $\ket{M}$, we will demand that $\cos \theta(s) \sin \theta(s) \geq \varepsilon_1$ for some parameter $\varepsilon_1$. A second parameter $\varepsilon_2$ controls the precision of phase estimation.

\begin{theorem}\label{thm:Search with known pm and HT}
Assume that the values of $p_M$ and $\limHT$ are known, and let $s \in [0,1)$, $T \geq 1$, and $\frac{1}{2} \geq \varepsilon_1 \geq \varepsilon_2 \geq 0$ be some parameters. If
\begin{align}
  \cos \theta(s) \sin \theta(s) \geq \varepsilon_1
  &&\text{and}&&
  T \geq \frac{\pi}{\sqrt{2} \varepsilon_2} \sqrt{\HT(s)}
  \label{eq:conditions}
\end{align}
where $\cos \theta(s)$ and $\sin \theta(s)$ are defined in \Eq{cos and sin theta} and $\HT(s)$ is the interpolated hitting time (see \Def{HT(s)}),
then \QWS$(P, M, s, \ceil{\log T})$ solves $\find(G)$ with success probability at least
\begin{equation}
  p_M + (1-p_M) (\varepsilon_1 - \varepsilon_2)^2
\end{equation}
and complexity of order $\setupcost + T \cdot (\updatecost + \checkingcost)$.
\end{theorem}

The proof of this theorem relies on the following lemma, originally due to Szegedy and proved in \Apx{Spectrum of W(s)}, which provides the spectral decomposition of the quantum walk operator $W(s)$ in terms of that of the discriminant $D(s)$.

\begin{restatable}[\cite{Sze}]{lemma}{lemWspectrum}\label{lem:W(s) spectrum}
Let $\mathcal{B}_k(s)$ for $k = 1, \dotsc, n$ be the subspaces from \Def{Subspaces}.
Assume that all eigenvalues $\lambda_k(s)$ of $D(s)$ are between $0$ and $1$, and let $\varphi_k(s) \in [0,\pi]$ be such that
\begin{equation}
  \lambda_k(s) = \cos \varphi_k(s).
  \label{eq:omega_k}
\end{equation}
Then $W(s)$ has the following eigenvalues and eigenvectors.
\begin{align}
& \text{On $\mathcal{B}_k(s)$:}
& & e^{\pm i \varphi_k(s)},
& \ket{\Psi^\pm_k(s)} &:= \frac{\ket{v_k(s), \bar{0}} \pm i \ket{v_k(s), \bar{0}}^\perp}{\sqrt{2}}.
  \label{eq:Bk} \\
& \text{On $\mathcal{B}_n(s)$:}
& & 1,
& \ket{\Psi_n(s)} &:= \ket{v_n(s), \bar{0}}.
  \label{eq:Bn}
\end{align}
In particular, $\bigcup_{k=1}^n \mathcal{B}_k(s)$ is the walk space of $W(s)$ and the remaining eigenvectors of $W(s)$ lie in the orthogonal complement $\mathcal{B}^\perp(s)$.
\end{restatable}

\newcommand{\psuccess}{q}

We can now prove \Thm{Search with known pm and HT}.

\begin{proof}[Proof of \autoref{thm:Search with known pm and HT}]
Let $t = \ceil{\log T}$ be the precision in the eigenvalue estimation. Our algorithm uses two registers: $\Reg_1$ and $\Reg_2$ with underlying state space $\Hi$ each. Occasionally we will attach the third register $\Reg_3$ initialized in $\ket{0} \in \C^2$ to check if the current vertex is marked.

\begin{algobox}{0.9}
\QWStxt$(P,M,s,t)$\label{alg:QWS}
\begin{enumerate}
  \item Use $\setup(P)$ to prepare the state $\ket{\pi} \ancilla$.\label{step:Preparation}
  \item Attach $\Reg_3$, apply $\Check(M)$ to $\Reg_1 \Reg_3$, and measure $\Reg_3$.
  \item If $\Reg_3 = 1$, measure $\Reg_1$ (in the vertex basis) and output the outcome.
  \item Otherwise, discard $\Reg_3$ and:\label{step:4}
  \begin{enumerate}
    \item Apply \QEE$(W(s),t)$ on $\Reg_1 \Reg_2$.\label{step:QEE}
    \item Attach $\Reg_3$, apply $\Check(M)$ to $\Reg_1 \Reg_3$, and measure $\Reg_3$.
    \item If $\Reg_3 = 1$, measure $\Reg_1$ (in the vertex basis) and output the outcome.
          Otherwise, output: \texttt{No marked vertex}.\label{step:Check}
  \end{enumerate}
\end{enumerate}
\end{algobox}

Notice that \step{Preparation} has complexity $\setupcost$, but \QEE$(W(s),t)$ in \step{QEE} has complexity of the order $2^t \cdot (\updatecost + \checkingcost)$ according to \Thm{eigenvalue-estimation} and \Lem{Update}. Thus, the total complexity is of the order $\setupcost + T \cdot (\updatecost + \checkingcost)$, and it only remains to bound the success probability.

Observe that the overall success probability is of the form $p_M + (1-p_M) \psuccess$ where $\psuccess$ is the probability to find a marked vertex in \step{4}. Thus, it remains to show that $\psuccess \geq (\varepsilon_1 - \varepsilon_2)^2$.

We assume that \QWS$(P,M,s,t)$ reaches \step{QEE}, otherwise a marked vertex is already found. At this point the state is $\ket{U} \ancilla$. Let us expand the first register of this state in the eigenbasis of the discriminant matrix $D(s)$. From now on we will omit the explicit dependence on $s$ when there is no ambiguity. Let
\begin{equation}
  \alpha_k := \braket{v_k}{U}
\end{equation}
and observe from \Lem{W(s) spectrum} that $\ket{v_k} \ancilla = \frac{1}{\sqrt{2}} (\ket{\Psi^+_k} + \ket{\Psi^-_k})$. Then
\begin{equation}
  \ket{U} \ancilla
  = \alpha_n \ket{v_n} \ancilla + \sum_{k=1}^{n-1} \alpha_k \ket{v_k} \ancilla
  = \alpha_n \ket{\Psi_n} + \frac{1}{\sqrt{2}} \sum_{k=1}^{n-1} \alpha_k
    \bigl( \ket{\Psi^+_k} + \ket{\Psi^-_k} \bigr).
\end{equation}
According to \Lem{W(s) spectrum}, the eigenvalues corresponding to $\ket{\Psi_n}$ and $\ket{\Psi^{\pm}_k}$ are $1$ and $e^{\pm i \varphi_k}$, respectively. From \Eq{QEE} we see that \QEE$(W(s),t)$ in \step{QEE} acts as follows:
\begin{align}
  \ket{\Psi_n} & \mapsto \ket{\Psi_n} \ket{0^t}, \\
  \ket{\Psi^{\pm}_k} & \mapsto \ket{\Psi^{\pm}_k} \ket{\xi^{\pm}_k},
\end{align}
where $\ket{\xi^{\pm}_k}$ is a $t$-qubit state that satisfies
\begin{equation}
  \braket{0^t}{\xi^{\pm}_k}
  = \frac{1}{2^t} \sum_{l=0}^{2^t-1} e^{\pm i \varphi_k l}
  =: \delta^{\pm}_k.
  \label{eq:delta_k}
\end{equation}
Thus, the state after eigenvalue estimation lies in $\Hi \x \Hi \x \C^{2^t}$ and is equal to
\begin{equation}
  \ket{\Phi} :=
  \alpha_n \ket{\Psi_n} \ket{0^t}
+ \frac{1}{\sqrt{2}} \sum_{k=1}^{n-1} \alpha_k
  \bigl( \ket{\Psi^+_k} \ket{\xi^+_k} + \ket{\Psi^-_k} \ket{\xi^-_k} \bigr).
\end{equation}

Recall that $\psuccess$ denotes the probability to obtain a marked vertex by measuring the first register of $\ket{\Phi}$ in \step{Check}. To lower bound $\psuccess$, we require that the last register of $\ket{\Phi}$ is in the state $\ket{0^t}$ (\ie, the phase is estimated to be $0$). Then
\begin{align}
  \sqrt{\psuccess}
  &=    \norm{(\Pi_M \x I \x I) \ket{\Phi}} \\
  &\geq \norm{(\Pi_M \x I \x \proj{0^t}) \ket{\Phi}} \\
  &\geq \norm{\alpha_n (\Pi_M \x I) \ket{\Psi_n}}
      - \frac{1}{\sqrt{2}} \norm[\Big]{
          (\Pi_M \x I) \sum_{k=1}^{n-1} \alpha_k
          \bigl( \delta^+_k \ket{\Psi^+_k} + \delta^-_k \ket{\Psi^-_k} \bigr)
        } \\
  &\geq \norm{\alpha_n (\Pi_M \x I) \ket{\Psi_n}}
      - \frac{1}{\sqrt{2}} \norm[\Big]{
         \sum_{k=1}^{n-1} \alpha_k
           \bigl( \delta^+_k \ket{\Psi^+_k} + \delta^-_k \ket{\Psi^-_k} \bigr)
        }. \label{eq:Difference of norms}
\end{align}
From \Lem{W(s) spectrum} and \Prop{vn(s)} we know that $\ket{\Psi_n} = \ket{v_n} \ancilla = (\cos \theta \ket{U} + \sin \theta \ket{M}) \ancilla$. Hence, we find that $\alpha_n = \braket{v_n}{U} = \cos \theta$ and $\norm{(\Pi_M \x I) \ket{\Psi_n}} = \sin \theta$. Moreover, from \Lem{W(s) spectrum} we also know that vectors $\ket{\Psi^\pm_1}, \dotsc, \ket{\Psi^\pm_k}$ are mutually orthogonal. Thus we can simplify \Eq{Difference of norms} as follows:
\begin{equation}
  \sqrt{\psuccess}
  \geq \cos \theta \sin \theta
     - \sqrt{\sum_{k=1}^{n-1} \abs{\alpha_k}^2 \delta_k^2}
  \label{eq:psuccess-difference}
\end{equation}
where $\delta_k := \abs{\delta^+_k} = \abs{\delta^-_k}$ (note from \Eq{delta_k} that $\delta^+_k$ and $\delta^-_k$ are complex conjuagtes). Now we will bound the second term in \Eq{psuccess-difference}.

Let us compute the sum of the geometric series in \Eq{delta_k}:
\begin{equation}
  \delta_k^2
  = \abs*{\frac{1}{2^t} \sum_{l=0}^{2^t-1} e^{i \varphi_k l}}^2
  = \frac{1}{2^{2t}}
    \abs*{\frac{1 - e^{i \varphi_k 2^t}}
              {1 - e^{i \varphi_k    }}}^2
  = \frac{1}{2^{2t}}
    \abs*{\frac{e^{-i \frac{\varphi_k}{2} 2^t} - e^{i \frac{\varphi_k}{2} 2^t}}
              {e^{-i \frac{\varphi_k}{2}    } - e^{i \frac{\varphi_k}{2}    }}}^2
  = \frac{       \sin^2 (\frac{\varphi_k}{2} 2^t)}
         {2^{2t} \sin^2 (\frac{\varphi_k}{2}    )}.
\end{equation}
We can upper bound the numerator in the final expression by one. To bound the denominator, we use $\sin \frac{x}{2} \geq \frac{x}{\pi}$ for $x \in [0,\pi]$. Hence, we get
\begin{equation}
  \delta_k^2
  \leq \frac{\pi^2}{2^{2t} \varphi^2_k}
  \leq \frac{\pi^2}{T^2 \varphi^2_k}
  \label{eq:delta_k^2}
\end{equation}
since we chose $t = \ceil{\log T}$.

The interpolated hitting time is given by \Def{HT(s)}:
\begin{equation}
  \HT(s) = \sum_{k=1}^{n-1} \frac{\abs{\braket{v_k(s)}{U}}^2}{1 - \lambda_k(s)}.
\end{equation}
If we substitute $\braket{v_k(s)}{U} = \alpha_k(s)$ and $\lambda_k(s) = \cos \varphi_k(s)$ from Eqs.~\EqRef{delta_k} and~\EqRef{omega_k}, and omit the dependence on $s$, we get
\begin{equation}
  \HT(s)
     =   \sum_{k=1}^{n-1} \frac{\abs{\alpha_k}^2}{1 - \cos \varphi_k}
     =   \sum_{k=1}^{n-1} \frac{\abs{\alpha_k}^2}{2 \sin^2 (\frac{\varphi_k}{2})}
  \geq 2 \sum_{k=1}^{n-1} \frac{\abs{\alpha_k}^2}{\varphi^2_k}
  \label{eq:HT(s) bound}
\end{equation}
since $x \geq \sin x$ for $x \in [0,\pi]$.

By combining Eqs.~\EqRef{delta_k^2} and~\EqRef{HT(s) bound} we get
\begin{equation}
       \sum_{k=1}^{n-1} \abs{\alpha_k}^2 \delta_k^2
  \leq \sum_{k=1}^{n-1} \abs{\alpha_k}^2 \frac{\pi^2}{T^2 \varphi^2_k}
     = \frac{\pi^2}{T^2} \sum_{k=1}^{n-1} \frac{\abs{\alpha_k}^2}{\varphi^2_k}
  \leq \frac{\pi^2}{2} \frac{\HT(s)}{T^2}.
\end{equation}
Thus, \Eq{psuccess-difference} becomes
\begin{equation}
  \sqrt{\psuccess}
  \geq \cos \theta(s) \sin \theta(s)
     - \frac{\pi}{\sqrt{2}} \frac{\sqrt{\HT(s)}}{T}
  \geq \varepsilon_1 - \varepsilon_2,
  \label{eq:psuccess}
\end{equation}
where the last inequality follows from our assumptions. Thus $\psuccess \geq (\varepsilon_1 - \varepsilon_2)^2$.
\end{proof}

\subsection{Algorithms with approximately known \texorpdfstring{$p_M$}{pM}}

In this section we show that a good approximation $\pstar$ of $p_M$ suffices to guarantee that the constraint $\cos \theta(s) \sin \theta(s) \geq \varepsilon_1$ in \Thm{Search with known pm and HT} is satisfied. Our strategy is to make a specific choice of the interpolation parameter $s$, based on $\pstar$.

Intuitively, we want to choose $s$ so that $\cos \theta(s) \sin \theta(s)$ is large (recall \Fig{UM}), since this will increase the success probability according to \Eq{psuccess}, and make it easier to satisfy the constraint on $\varepsilon_1$ in \Thm{Search with known pm and HT}. The maximal value of $\cos \theta(s) \sin \theta(s)$ is achieved when $\sin \theta(s) = \cos \theta(s) = 1/\sqrt{2}$, and from \Eq{cos and sin theta} we get that the optimal value of $s$ as a function of $p_M$ is
\begin{equation}
  s(p_M) := 1 - \frac{p_M}{1-p_M}.
\end{equation}
Thus, when only an approximation $\pstar$ of $p_M$ is known, we will choose the interpolation parameter to be
\begin{equation}
  \sstar := s(\pstar) = 1 - \frac{\pstar}{1-\pstar}.
  \label{eq:sstar}
\end{equation}
Since we want $\sstar \geq 0$, we have to always make sure that $\pstar \leq 1/2$. In fact, from now we will also assume that $p_M \leq 1/2$. This is without loss of generality, since one can always prepare the initial state $\ket{\pi}$ at cost $\setupcost$ and measure it in the standard basis. If $p_M \geq 1/2$, this yields a marked vertex with probability at least $1/2$.

\begin{proposition}\label{prop:pstar}
If $p_M, \varepsilon_1 \in [0,\frac{1}{2}]$ and $\pstar$ satisfy
\begin{equation}
  2 \varepsilon_1 p_M \leq \pstar \leq 2 (1 - \varepsilon_1) p_M,
  \label{eq:pstar bounds}
\end{equation}
then $\cos \theta(\sstar) \sin \theta(\sstar) \geq \varepsilon_1$ where $\sstar := 1 - \frac{\pstar}{1-\pstar}$.
\end{proposition}

\begin{proof}
To get the desired result, we will show that the two inequalities in \Eq{pstar bounds} imply that $\cos^2 \theta(\sstar) \geq \varepsilon_1$ and $\sin^2 \theta(\sstar) \geq \varepsilon_1$, respectively, where
\begin{align}
  \cos^2 \theta(\sstar) &= \frac{(1 - p_M) \pstar}{p_M + \pstar - 2 p_M \pstar}, &
  \sin^2 \theta(\sstar) &= \frac{p_M (1 - \pstar)}{p_M + \pstar - 2 p_M \pstar}
  \label{eq:cos and sin sstar}
\end{align}
according to \Eq{cos and sin theta}.

From \Eq{cos and sin sstar}, we have $\sin^2 \theta(\sstar) \geq \varepsilon_1$ if and only if
\begin{equation}
  \pstar
  \leq \frac{(1 - \varepsilon_1) p_M}{\varepsilon_1 + p_M - 2 \varepsilon_1 p_M}.
  \label{eq:sin sstar}
\end{equation}
Since $p_M, \varepsilon_1 \leq 1/2$, the denominator is upper bounded as
\begin{equation}
  \varepsilon_1 + (1 - 2 \varepsilon_1) p_M
  \leq \varepsilon_1 + \frac{1 - 2 \varepsilon_1}{2}
  = \frac{1}{2}.
\end{equation}
Therefore, $\pstar \leq 2 (1 - \varepsilon_1) p_M$ implies \Eq{sin sstar}, which is equivalent to $\sin^2 \theta(\sstar) \geq \varepsilon_1$.

Similarly from \Eq{cos and sin sstar} we have $\cos^2 \theta(\sstar) \geq \varepsilon_1$ if and only if
\begin{equation}
  \pstar
  \geq \frac{\varepsilon_1 p_M}{1 - \varepsilon_1 - p_M + 2 \varepsilon_1 p_M},
  \label{eq:cos sstar}
\end{equation}
where the denominator is lower bounded as
\begin{equation}
  1 - \varepsilon_1 - (1 - 2 \varepsilon_1) p_M
  \geq 1 - \varepsilon_1 - \frac{1 - 2 \varepsilon_1}{2}
  = \frac{1}{2}.
\end{equation}
Therefore, $\pstar \geq 2 \varepsilon_1 p_M$ implies \Eq{cos sstar}, which is equivalent to $\cos^2 \theta(\sstar) \geq \varepsilon_1$.
\end{proof}

\subsubsection{Known \texorpdfstring{$\limHT$}{HT(P,M)}}

Now we will use \Prop{pstar} to show how an approximation $\pstar$ of $p_M$ can be used to make a specific choice of the parameters $\varepsilon_1$, $\varepsilon_2$, $s$, and $T$ in \Thm{Search with known pm and HT}, so that our quantum search algorithm succeeds with constant probability.

To be more specific, we assume that we have an approximation $\pstar$ of $p_M$ such that
\begin{equation}
  \abs{\pstar - p_M} \leq \frac{1}{3} p_M,
  \label{eq:approximation}
\end{equation}
where the constant $1/3$ is an arbitrary choice. Notice that
\begin{align}
  \frac{1}{3} p_M
  & \geq \pstar - p_M
    \quad \Longleftrightarrow \quad
    \frac{4}{3} p_M \geq \pstar, \\
  \frac{1}{3} p_M
  & \geq p_M - \pstar
    \quad \Longleftrightarrow \quad
    \pstar \geq \frac{2}{3} p_M,
\end{align}
so \Eq{approximation} is equivalent to
\begin{equation}
  \frac{2}{3} p_M \leq \pstar \leq \frac{4}{3} p_M.
  \label{eq:approximation2}
\end{equation}
If we are given such $\pstar$ and we choose $\sstar$ according to \Eq{sstar}, then our algorithm succeeds with constant probability if $T$ is sufficiently large.

\begin{theorem}\label{thm:Search with known HT}
Assume that we know the value of $\limHT$ and an approximation $\pstar$ of $p_M$ such that $\abs{\pstar - p_M} \leq p_M/3$. If $T \geq 14 \sqrt{\limHT}$ then \QWS$(P, M, \sstar, \ceil{\log T})$ solves $\find(G)$ with probability at least $1/36$ and complexity of order $\setupcost + T \cdot (\updatecost + \checkingcost)$.
\end{theorem}

\begin{proof}
We are given $\pstar$ that satisfies \Eq{approximation2}. This is equivalent to \Eq{pstar bounds} if we choose $\varepsilon_1 := 1/3$. Without loss of generality $p_M \leq 1/2$, so from \Prop{pstar} we get that $\cos \theta(\sstar) \sin \theta(\sstar) \geq \varepsilon_1$. Thus, the first condition in \Eq{conditions} of \Thm{Search with known pm and HT} is satisfied.

Next, we choose $\varepsilon_2 := 1/6$ somewhat arbitrarily. According to \Thm{HT}, $\HT(\sstar) \leq \limHT$. Thus
\begin{equation}
  \frac{\pi}{\sqrt{2}} \frac{1}{\varepsilon_2} \sqrt{\HT(\sstar)}
    \leq \pi \, 3 \sqrt{2} \sqrt{\limHT}
    \leq 14 \sqrt{\limHT}
    \leq T,
\end{equation}
so the second condition in \Eq{conditions} is also satisfied.

Hence, according to \Thm{Search with known pm and HT}, \QWS$(P, M, \sstar, \ceil{\log T})$ solves $\find(G)$ with success probability at least
\begin{equation}
  p_M + (1 - p_M) (\varepsilon_1 - \varepsilon_2)^2
  \geq (\varepsilon_1 - \varepsilon_2)^2
  = \biggl( \frac{1}{3} - \frac{1}{6} \biggr)^2
  = \frac{1}{36}
\end{equation}
and complexity of order $\setupcost + T \cdot (\updatecost + \checkingcost)$.
\end{proof}

\subsubsection{Unknown \texorpdfstring{$\limHT$}{HT(P,M)}}

Recall from \Thm{Search with known HT} in previous section that a marked vertex can be found if $\pstar$, an approximation of $p_M$, and $\limHT$ are known. In this section we show that a marked vertex can still be found (with essentially the same expected complexity), even if the requirement to know $\limHT$ is relaxed.

\begin{theorem}\label{thm:Search with unknown HT}
Assume that we are given $\pstar$ such that $\abs{\pstar - p_M} \leq p_M/3$, then \QWR$(P,M,\sstar,50)$ solves $\find(G)$ with expected quantum complexity of order
\begin{equation}
  \log(T) \cdot \setupcost + T \cdot (\updatecost + \checkingcost),
  \quad\textrm{where}\quad
  T = \sqrt{\limHT}.
\end{equation}
\end{theorem}

\begin{proof}
The idea is to repeatedly use \QWS$(P,M,\sstar,t)$ with increasing accuracy of the eigenvalue estimation. We start with $t = 1$ and in every iteration increase it by one. Once $t$ is above some threshold $t_0$, any subsequent iteration outputs a marked element with probability that is at least a certain constant. To boost the success probability of the \QWS$(P,M,\sstar,t)$ subroutine, for each value of $t$ we call it $k = 50$ times.

\begin{algobox}{0.8}
\QWRtxt$(P,M,\sstar,k)$\label{alg:QWR}
\begin{enumerate}
 \item Let $t=1$.
 \item\label{step:loop} Call $k$ times \QWS$(P,M,\sstar,t)$.
 \item\label{step:end} If no marked vertex is found, set $t \leftarrow t+1$ and go back to \step{loop}.
\end{enumerate}
\end{algobox}

\newcommand{\pfail}{p_{\text{fail}}}

Let $t_0$ be the smallest integer that satisfies
\begin{equation}
  14 \sqrt{\limHT} \leq 2^{t_0}.
  \label{eq:t0}
\end{equation}
Assume that variable $t$ has reached value $t \geq t_0$, but \QWR$(P,M,\sstar,50)$ has not terminated yet. By \Thm{Search with known HT}, each execution of \QWS$(P,M,\sstar,t)$ outputs a marked vertex with probability at least $1/36$. Let $\pfail$ be the probability that none of the $k = 50$ executions in \step{loop} succeeds. Notice that
\begin{equation}
  \pfail \leq (1 - 1/36)^{50} \leq 1/4.
\end{equation}

Let us assume that \QWR$(P,M,\sstar,50)$ terminates with the final value of $t$ equal to $t_f$. Recall from \Thm{Search with known pm and HT} that \QWS$(P,M,\sstar,t)$ has complexity of order $\setupcost + 2^t \cdot (\updatecost + \checkingcost)$, so the expected complexity of \QWR$(P,M,\sstar,50)$ is of order
\begin{equation}
  N_1 \cdot \setupcost + N_2 \cdot (\updatecost + \checkingcost),
  \label{eq:N1 and N2}
\end{equation}
where $N_1$ is the expectation of $t_f$, and $N_2$ is the expectation of $2 + 4 + \dotsb + 2^{t_f}$.

To upper bound $N_1$, we assume that the first $t_0 - 1$ iterations fail. Since each of the remaining iterations fails with probability at most $\pfail$, we get
\begin{align}
  N_1 &\leq (t_0 - 1) + \sum_{t=t_0}^\infty \pfail^{1+(t-t_0)} \\
      &   = (t_0 - 1) + \frac{\pfail}{1-\pfail} \\
      &\leq (t_0 - 1) + \frac{1/4}{3/4} \\
      &\leq t_0.
\end{align}
We use the same strategy to upper bound $N_2$:
\begin{align}
  N_2 &\leq \sum_{t=1}^{t_0-1} 2^t + \sum_{t=t_0}^\infty \pfail^{1+(t-t_0)} 2^t \\
      &   = (2^{t_0} - 2) + \pfail \cdot \sum_{t=0}^\infty \pfail^t 2^{t+t_0} \\
      &\leq (2^{t_0} - 2) + \frac{1}{4} \cdot \sum_{t=0}^\infty
            \Bigl( \frac{1}{4} \cdot 2 \Bigr)^t \cdot 2^{t_0} \\
      &   = (2^{t_0} - 2) + \frac{1}{4} \cdot 2 \cdot 2^{t_0} \\
      &\leq 2 \cdot 2^{t_0}.
\end{align}
We plug the bounds on $N_1$ and $N_2$ in \Eq{N1 and N2} and get that the expected complexity is of order $t_0 \cdot \setupcost + 2^{t_0 + 1} \cdot (\updatecost + \checkingcost)$. Since $t_0$ satisfies \Eq{t0}, this concludes the proof.
\end{proof}

\subsection{Algorithms with a given bound on
\texorpdfstring{$p_M$ or $\limHT$}{pM or HT(P,M)}}

In previous section, we considered the case when we know a \emph{relative} approximation of $p_M$, \ie, a value $\pstar$ such that $\abs{\pstar - p_M} \leq p_M/3$. In this section, we consider the case when we are given an \emph{absolute} lower bound $p_{\min}$ such that $p_{\min} \leq p_M$, or an \emph{absolute} upper bound $\HT_{\max} \geq \limHT$, or both. In particular, for problem $\find(G)^{(\geq k)}$ we can set $p_{\min} := \min_{M':\abs{M'}=k} p_{M'}$ and $\HT_{\max} := \max_{M':\abs{M'}\geq k} \HT^+(P,M')$.

\subsubsection{Assuming a bound on \texorpdfstring{$p_M$}{pM}}

\begin{theorem}\label{thm:Search with bound on pmin and HTmax}
Assume that we are given $p_{\min}$ such that $p_{\min} \leq p_M$, $\find(G)$ can be solved with expected quantum complexity of order
\begin{equation}
  \sqrt{\log(1/p_{\min})} \cdot
  \bigl[ \log(T) \cdot \setupcost + T \cdot (\updatecost + \checkingcost) \bigr],
  \quad\textrm{where}\quad
  T = \sqrt{\limHT}.
\end{equation}
Moreover, given $\HT_{\max}$ such that $\HT_{\max} \geq \limHT$, we can solve  $\find(G)$ with quantum complexity of order
\begin{equation}
  \sqrt{\log(1/p_{\min})} \cdot
  \bigl[ \setupcost + T \cdot (\updatecost + \checkingcost) \bigr],
  \quad\textrm{where}\quad
  T = \sqrt{\HT_{\max}}.
\end{equation}
\end{theorem}

\begin{proof}
We prove the first part of the theorem. The second part is similar except one has to use \QWS$(P,M,\sstar,T)$ instead of \QWR$(P,M,\sstar,50)$.

To apply \Thm{Search with unknown HT}, it is enough to obtain an approximation $\pstar$ of $p_M$ such that $\abs{\pstar - p_M} \leq p_M/3$. Recall from \Eq{approximation2} that this is equivalent to finding $\pstar$ such that
\begin{equation}
  \frac{2}{3} p_M \leq \pstar \leq \frac{4}{3} p_M.
  \label{eq:approximation2'}
\end{equation}
Let $l$ be the largest integer such that $p_M \leq 2^{-l}$. Then
\begin{equation}
  \frac{1}{2} \cdot 2^{-l} \leq p_M \leq 2^{-l}
\end{equation}
and hence
\begin{equation}
  \frac{2}{3} p_M
  \leq \frac{2}{3} \cdot 2^{-l}
     = \frac{4}{3} \cdot \biggl( \frac{1}{2} \cdot 2^{-l} \biggr)
  \leq \frac{4}{3} p_M.
\end{equation}
We can make sure that \Eq{approximation2'} is satisfied by choosing $\pstar := \frac{2}{3} \cdot 2^{-l}$. Unfortunately, we do not know the value of $l$. However, we know that $p_{\min} \leq p_M$ and without loss of generality we can assume that $p_M \leq 1/2$. Thus, it only suffices to check all values of $l$ from $1$ to $\floor{\log(1/p_{\min})}$.

To find a marked vertex, we replace \step{loop} in the \QWR{} algorithm by a loop over the $\floor{\log(1/p_{\min})}$ possible values of $\pstar$:
\begin{algobox}{0.5}
  For $l = 1$ to $\floor{\log(1/p_{\min})}$ do: \vspace{2pt}
  \begin{itemize}
    \item Let $\pstar := \frac{2}{3} \cdot 2^{-l}$.
    \item Call $k$ times \QWS$(P, M, s(\pstar), t)$.
  \end{itemize}
\end{algobox}
Recall from \Thm{Search with known pm and HT} that the complexity of \QWS$(P, M, \sstar, t)$ depends only on $t$. Hence, the analysis of the modified algorithm is the same, except that now the complexity of \step{loop} is multiplied by a factor of order $\log(1/p_{\min})$. In fact, this is the only non-trivial step of the \QWR{} algorithm, so the overall complexity increases by this multiplicative factor. Finally, note that instead of trying all possible values of $\pstar$, we can search for the right value using Grover's algorithm, following the approach of~\cite{HMdW03}, therefore reducing the multiplicative factor to $\sqrt{\log(1/p_{\min})}$.
\end{proof}

\subsubsection{Assuming a bound on \texorpdfstring{$\limHT$}{HT(P,M)}}

\begin{theorem}\label{thm:Search with bound on HTmax}
Assume that we are given $\HT_{\max}$ such that $\HT_{\max} \geq \limHT$, $\find(G)$ can be solved with expected quantum complexity of order
\begin{equation}
  \log(1/p_M) \cdot \bigl[ \setupcost + T \cdot (\updatecost + \checkingcost) \bigr],
  \quad\textrm{where}\quad
  T = \sqrt{\HT_{\max}}.
\end{equation}
\end{theorem}

\begin{proof}
We use \QWS$(P,M,\sstar,t)$ with $t = \ceil[\big]{\log \sqrt{\HT_{\max}}}$ and perform a dichotomic search for an appropriately chosen value of $\pstar$. This dichotomic search uses backtracking, since the branching in the dichotomy is with bounded error, similarly to the situation in~\cite{FeigeRPU94}.

Let us first describe the robust binary search of~\cite{FeigeRPU94}.
Let $x\neq 0^n$ be a $n$-bit string of $0$'s followed by some $1$'s.
An algorithm can only access $x$ by querying its bits as follows.
The answer to a query $i\in[n]$ to $x$ is a random and independent bit which takes value $x_i$ with probability at least $2/3$.

When there is no error, finding the largest $i$ such that $x_i=0$ can be done using the usual  binary search. Start with $a=1$ and $b=n$.
At each step, query $x_i$ with $i=\lceil (a+b)/2\rceil$. Then set $a=i$ if $x_i=0$, and $b=i$ otherwise.
The procedure stops when $x_a=0$ and $x_b=1$.

In our error model, the above algorithm can be made robust by adding a sanity check. 
Before querying $x_i$, bits $x_a$ and $x_b$ are also queried.
If one of the two answers is inconsistent, that is either the answer to query $a$ is $1$ or the answer to query $b$ is $0$,
the algorithm backtracks to the previous values of $a$ and $b$.
It is proven in~\cite{FeigeRPU94} that this procedure converges with expected time $\Theta(\log n)$ and outputs a correct value with high probability, say at least $2/3$.

For our problem, we are going to test each candidate value $\pstar$ using the following procedure for $k=50$.
\begin{algobox}{0.7}
\textbf{Test}$(P,M,\pstar,k)$\label{alg:test}
  \begin{enumerate}
    \item Call  $k$ times  \QWS$(P,M,s(\pstar),1)$;\\
    if a marked vertex is found, output it and stop.
    \item Call $k$ times  \QEE$(W(s(\pstar)),1)$;\\
    if a minority of $0$s is found output ``$p_M\le \pstar$'',\\
    else  output ``$p_M\ge \pstar$''.
  \end{enumerate}
\end{algobox}
This procedure satisfies the following:
\begin{itemize}
\item If $p_M/3\le\pstar\leq 2p_M/3$, then \textbf{Test}$(P,M,\pstar,50)$ outputs a marked element with probability at least $2/3$;
\item If $\pstar\leq p_M/3$, then \textbf{Test}$(P,M,\pstar,50)$ outputs ``$p_M\ge \pstar$'' with probability at most $2/3$;
\item If $\pstar\geq 2p_M/3$, then \textbf{Test}$(P,M,\pstar,50)$ outputs ``$p_M\le \pstar$'' with probability at most $2/3$.
\end{itemize}

Now we conduct a search similarly as in~\cite{FeigeRPU94}, starting with $a=0$ and $b=1$.
The only difference is that the search stops when a marked element is found.
At each step, we check the consistency of $a$ and $b$ by running \textbf{Test}$(P,M,a,50)$ and \textbf{Test}$(P,M,b,50)$.
If there is a contradiction, we backtrack to the previous values of $a$ and $b$. Otherwise
we conduct the dichotomy search by running \textbf{Test}$(P,M,\pstar,k)$ with $\pstar = (a + b)/2$
(in order to set either $a = \pstar$ or $b = \pstar$). 
The search stops when a marked element is found.

Our procedure behaves similarly to the one of~\cite{FeigeRPU94}.
Indeed, it converges even faster since it stops with probability at least $2/3$ when $\pstar\in[p_M/3,2p_M/3]$.
Therefore our procedure ends after $\Order(\log(1/p_M))$ expected iterations.
\end{proof}

\section*{Acknowledgments}

MO would like to acknowledge Andrew Childs for many helpful discussions. The authors would also like to thank Andris Ambainis for useful comments.
Part of this work was done while HK, MO, and JR were at NEC Laboratories America in Princeton. MO also was affiliated with University of Waterloo and Institute for Quantum Computing (supported by QuantumWorks) and IBM TJ Watson Research Center (supported by DARPA QUEST program under Contract No. HR0011-09-C-0047) during this project. Presently FM, MO and JR are supported by the European Union Seventh Framework Programme (FP7/2007-2013) under grant agreement no. 600700 (QALGO).
FM is also supported by the French ANR Blanc project ANR-12-BS02-005 (RDAM).
Last, JR acknowledges support from the Belgian ARC project COPHYMA.

\appendix

\newcommand{\rv}[2]{(#1\;\,#2)} 
\newcommand{\Rv}[2]{\bigl(#1\;\,#2\bigr)} 
\newcommand{\co}[2]{[#1,#2)}

\newcommand{\ofs}{} 

\section{Semi-absorbing Markov chains} \label{apx:Semi-absorbing}

In this appendix we study a special type of Markov chains described by a one-parameter family $P(s)$ corresponding to convex combinations of $P$ and the associated absorbing chain $P'$. Intuitively, some states of $P(s)$ are hard to escape and the interpolation parameter $s$ controls how absorbing they are. For this reason we call such chains \emph{semi-absorbing}. In this appendix we consider various properties of semi-absorbing Markov chains as a function of the interpolation parameter $s$. The main result of this appendix is \Thm{HT} which is of central importance in \Sect{Search algorithms}.

We discussed some preliminaries on Markov chains and defined basic concepts such as ergodicity in \Sect{Random walks}. Here we begin by defining the interpolated Markov chain $P(s)$ and considering various its properties, such as the stationary distribution and reversibility (\Apx{Semi-absorbing basics}). We proceed by applying these concepts to define and study the discriminant matrix of $P(s)$ which encodes all relevant properties of $P(s)$, such as eigenvalues and the principal eigenvector, but has a much more convenient form (\Apx{Discriminant}). Finally, we define the hitting time $\HT$ and the interpolated hitting time $\HT(s)$ and relate the two in the case of a single marked element via \Thm{HT}, which is our main result regarding semi-absorbing Markov chains (\Apx{HT}).

Results from this appendix will be used in \Sect{Search algorithms} to construct quantum search algorithms based on discrete-time quantum walks.

\subsection{Basic properties of semi-absorbing Markov chains} \label{apx:Semi-absorbing basics}

Assume that a subset $M \subset X$ of size $m := \abs{M}$ of the states are marked (throughout this chapter we assume that $M$ is not empty).    (see~\cite[Chapter~III]{KemenySnell} and~\cite[Sect.~11.2]{GrinsteadSnell}). Note that $P'$ differs from $P$ only in the rows corresponding to the marked states (where it contains all zeros on non-diagonal elements, and ones on the diagonal). If we arrange the states of $X$ so that the unmarked states $U := X \setminus M$ come first, matrices $P$ and $P'$ have the following block structure:
\begin{align}
  P &:= \mx{
    P_{UU} & P_{UM} \\
    P_{MU} & P_{MM}
  }, &
  P' &:= \mx{
    P_{UU} & P_{UM} \\
    0      & I
  },
  \label{eq:P}
\end{align}
where $P_{UU}$ and $P_{MM}$ are square matrices of size $(n-\m) \times (n-\m)$ and $\m \times \m$, respectively, while $P_{UM}$ and $P_{MU}$ are matrices of size $(n-\m) \times \m$ and $\m \times (n-\m)$, respectively.

\begin{figure}[th]
  \centering

\begin{tikzpicture}
  [line width = 0.7pt,
   dot/.style = {circle, draw = black, fill = black, inner sep = 0mm, minimum size = 1.3mm},
   crc/.style = {circle, draw = black, minimum size = 5.0mm, inner sep = 1pt},
   arc/.style = {-latex}]

  \def\dl{45} 
  \def\da{20} 

  \newcommand{\selfloop}[2]{
    \draw [arc] (#1) to [min distance = 10mm, out = #2 + \dl, in = #2 - \dl] (#1);
  }

  \def\w{12mm} 
  \def\h{ 8mm} 
  \def\d{10mm} 

  \newcommand{\drawgraph}[1]{
    \coordinate (u0#1) at (0,0);
    \coordinate (u1#1) at (-\w,+\h);
    \coordinate (u2#1) at (+\w,+\h);
    \coordinate (m1#1) at (-\w,-\d);
    \coordinate (m2#1) at (  0,-\d);
    \coordinate (m3#1) at (+\w,-\d);
    \draw [draw = black!30, line width = 0.6*\w, line join = round] (u0#1) -- (u1#1) -- (u2#1) -- cycle;
    \draw [draw = black!30, line width = 0.6*\w, line cap  = round] (m1#1) -- (m3#1);
    \path (u1#1)+(-0.9,0.0) node (U) {$U$};
    \path (m1#1)+(-0.9,0.0) node (M) {$M$};
    \foreach \s in {u0, u1, u2, m1, m2, m3} {
      \path (\s#1) node [dot] (v\s#1) {};
    }
    \foreach \from / \to in {u1/u0, u1/u2, u0/m1, u0/m2, u2/u0} {
      \draw [arc] (v\from#1) to (v\to#1);
    }
    \selfloop{vu2#1}{ 90-36}
    \selfloop{vm1#1}{-90-36}
  }


  \begin{scope}
    \drawgraph{}
    \draw [arc] (vm1) to (vu1);
    \draw [arc] (vm2) to (vm3);
    \draw [arc] (vm3) to (vu0);
    \draw [arc] (vm1) to [out =       \da, in = 180 - \da] (vm2);
    \draw [arc] (vm2) to [out = 180 + \da, in =     - \da] (vm1);
  \end{scope}


  \begin{scope}[xshift = 5*\w]
    \drawgraph{'}
    \selfloop{vm2'}{-90}
    \selfloop{vm3'}{-90+36}
  \end{scope}

\end{tikzpicture}

  \caption[Markov chain $P$ and the corresponding absorbing chain $P'$]{Directed graphs underlying Markov chain $P$ (left) and the corresponding absorbing chain $P'$ (right). Outgoing arcs from vertices in the marked set $M$ have been turned into self-loops in $P'$.}
  \label{fig:Absorbing}
\end{figure}

Recall that we have defined an \emph{interpolated} Markov chain that interpolates between $P$ and $P'$:
\begin{equation}
  P(s) := (1-s) P + s P', \quad 0 \leq s \leq 1.
  \label{eq:P(s)}
\end{equation}
This expression has some resemblance with adiabatic quantum computation where similar interpolations are usually defined for quantum Hamiltonians~\cite{Farhi}. Indeed, the interpolated Markov chain $P(s)$ was used in \cite{KOR} to construct an adiabatic quantum search algorithm. Note that $P(0) = P$, $P(1) = P'$, and $P(s)$ has the following block structure:
\begin{equation}
  P(s) = \mx{
    P_{UU}      & P_{UM} \\
    (1-s)P_{MU} & (1-s)P_{MM} + s I
  }.
  \label{eq:Block P(s)}
\end{equation}

\begin{proposition}
If $P$ is ergodic then so is $P(s)$ for $s \in \co{0}{1}$. $P(1)$ is not ergodic.
\label{prop:Ergodicity}
\end{proposition}

\begin{proof}
Recall from \Def{Ergodicity} that ergodicity of a Markov chain can be established just by looking at its underlying graph. A non-zero transition probability in $P$ remains non-zero also in $P(s)$ for $s \in \co{0}{1}$. Thus the ergodicity of $P$ implies that $P(s)$ is also ergodic for $s \in \co{0}{1}$. However, $P(1)$ is not irreducible, since states in $U$ are not reachable from $M$. Thus $P(1)$ is \emph{not} ergodic.
\end{proof}

\begin{proposition}\label{prop:P' to t}
$(P'^{\,t})_{UU} = P_{UU}^t$.
\end{proposition}

\begin{proof}
Let us derive an expression for $P'^{\,t}$, the matrix of transition probabilities corresponding to $t$ applications of $P'$. Notice that $\smx{a & b \\ 0 & 1} \smx{c & d \\ 0 & 1} = \smx{ac & ad + b \\ 0 & 1}$. By induction,
\begin{equation}
  P'^{\,t} = \mx{P_{UU}^t & \sum_{k=0}^{t-1} P_{UU}^k P_{UM} \\ 0 & I}.
  \label{eq:P't}
\end{equation}
When restricted to $U$, it acts as $P_{UU}^t$.
\end{proof}

\begin{proposition}[{\cite[Theorem~11.3, p.~417]{GrinsteadSnell}}]\label{prop:PUU limit}
If $P$ is irreducible then $\lim_{k \to \infty} P_{UU}^k = 0$.
\end{proposition}

Intuitively this means that the sub-stochastic process defined by $P_{UU}$ eventually dies out or, equivalently, that the unmarked states of $P'$ eventually get absorbed (by \Prop{P' to t}).

\begin{proof}
Let us fix an unmarked initial state $x$. Since $P$ is irreducible, we can reach a marked state from $x$ in a finite number of steps. Note that this also holds true for $P'$. Let us denote the smallest number of steps by $l_x$ and the corresponding probability by $p_x > 0$. Thus in $l := \max_x l_x$ steps of $P'$ we are guaranteed to reach a marked state with probability at least $p := \min_x p_x > 0$, independently of the initial state $x \in U$. Notice that the probability to still be in an unmarked state after $kl$ steps is at most $(1-p)^k$ which approaches zero as we increase $k$.
\end{proof}

\begin{proposition}[{\cite[Theorem~3.2.1, p.~46]{KemenySnell}}]\label{prop:Invertible}
If $P$ is irreducible then $I - P_{UU}$ is invertible.
\end{proposition}

\begin{proof}
Notice that
\begin{equation}
  (I - P_{UU}) \cdot (I + P_{UU} + P_{UU}^2 + \dotsb + P_{UU}^{k-1}) = I - P_{UU}^k
\end{equation}
and take the determinant of both sides. From \Prop{PUU limit} we see that $\lim_{k \to \infty} \det(I - P_{UU}^k) = 1$. By continuity, there exists $k_0$ such that $\det(I - P_{UU}^{k_0}) > 0$, so the determinant of the left-hand side is non-zero as well. Using multiplicativity of the determinant, we conclude that $\det(I - P_{UU}) \neq 0$ and thus $I - P_{UU}$ is invertible.
\end{proof}
In the Markov chain literature $(I - P_{UU})^{-1}$ is called the \emph{fundamental matrix} of $P$.

\subsubsection{Stationary distribution} \label{apx:Stationary distribution}

From now on let us demand that $P$ is ergodic. Then according to the \PFT{} it has a unique stationary distribution $\pi$ that is non-zero everywhere. Let $\pi_U$ and $\pi_M$ be row vectors of length $n-\m$ and $\m$ that are obtained by restricting $\pi$ to sets $U$ and $M$, respectively. Then
\begin{align}
  \pi  &  = \mx{\pi_U & \pi_M}, &
  \pi' & := \mx{  0_U & \pi_M}
  \label{eq:Stationary}
\end{align}
where $0_U$ is the all-zeroes row vector indexed by elements of $U$ and $\pi'$ satisfies $\pi' P' = \pi'$.

Let $p_M := \sum_{x \in M} \pi_x$\label{math:pM} be the probability to pick a marked element from the stationary distribution. In analogy to the definition of $P(s)$ in \Eq{P(s)}, let $\pi(s)$ be a convex combination of $\pi$ and $\pi'$, appropriately normalized:
\begin{equation}
  \pi(s) := \frac{(1-s) \pi + s \pi'}{(1-s) + s p_M}
          = \frac{1}{1 - s (1-p_M)} \mx{(1-s) \pi_U & \pi_M}.
  \label{eq:pi(s)}
\end{equation}

\begin{proposition}\label{prop:Stationary}
$\pi(s)$ is the unique stationary distribution of $P(s)$ for $s \in \co{0}{1}$. At $s = 1$ any distribution with support only on marked states is stationary, including $\pi(1)$.
\end{proposition}

\begin{proof}
Notice that
\begin{equation}
  (\pi - \pi') (P - P')
  = \mx{\pi_U & 0} \mx{0 & 0 \\ P_{MU} & P_{MM} - I}
  = 0
\end{equation}
which is equivalent to
\begin{equation}
  \pi P' + \pi' P = \pi P + \pi' P'.
  \label{eq:primes}
\end{equation}
Using this equation we can check that $\pi(s) P(s) = \pi(s)$ for any $s \in [0,1]$:
\begin{align}
 &  \bigl( (1-s) \pi + s \pi' \bigr)
    \bigl( (1-s) P + s P' \bigr) \\
 &= (1-s)^2 \pi P + (1-s)s (\pi P' + \pi' P) + s^2 \pi' P' \\
 &= (1-s)^2 \pi + (1-s)s (\pi + \pi') + s^2 \pi' \\
 &= \bigl( (1-s) \pi + s \pi' \bigr) \bigl( (1 - s) + s \bigl) \\
 &= (1-s) \pi + s \pi'.
\end{align}
Recall from \Prop{Ergodicity} that $P(s)$ is ergodic for $s \in \co{0}{1}$ so $\pi(s)$ is the unique stationary distribution by \PFT{}. Since $P'$ acts trivially on marked states, any distribution with support only on marked states is stationary for $P(1)$.
\end{proof}

\subsubsection{Reversibility} \label{apx:Reversibility}

\begin{definition}\label{def:Reversibility}
Markov chain $P$ is called \emph{reversible\index{Markov chain!reversible}} if it is ergodic and satisfies the so-called \emph{detailed balance condition}
\begin{equation}
  \forall x,y \in X: \pi_x P_{xy} = \pi_y P_{yx}
  \label{eq:Detailed balance}
\end{equation}
where $\pi$ is the unique stationary distribution of $P$.
\end{definition}

Intuitively this means that the net flow of probability in the stationary distribution between every pair of states is zero. Note that \Eq{Detailed balance} is equivalent to
\begin{equation}
  \diag(\pi) \, P = P\tp \diag(\pi) = \bigl( \diag(\pi) P \bigr)\tp
\end{equation}
where $\diag(\pi)$ is a diagonal matrix whose diagonal is given by vector $\pi$. Thus \Eq{Detailed balance} is equivalent to saying that matrix $\diag(\pi) P$ is symmetric.

\begin{proposition}\label{prop:Reversibility}
If $P$ is reversible then so is $P(s)$ for any $s \in [0,1]$. Hence, $P(s)$ satisfies the interpolated detailed balance equation
\begin{equation}
  \forall s \in [0,1], \, \forall x,y \in X: \pi_x(s) P_{xy}(s) = \pi_y(s) P_{yx}(s).
  \label{eq:Interpolated detailed balance}
\end{equation}
\end{proposition}

\begin{proof}
First, notice that the absorbing walk $P'$ is reversible\footnote{Strictly speaking, the definition of reversibility also includes ergodicity for the stationary distribution to be uniquely defined. However, we will relax this requirement for $P'$ since, by continuity, $\pi'$ is the natural choice of the ``unique'' stationary distribution.} since $\diag(\pi') P'$ is a symmetric matrix:
\begin{equation}
  \diag(\pi') P'
  = \mx{0 & 0 \\ 0 & \diag(\pi_M)}
    \mx{P_{UU} & P_{UM} \\ 0 & I}
  = \mx{0 & 0 \\ 0 & \diag(\pi_M)}
  = \diag(\pi').
\end{equation}
Next, notice that
\begin{equation}
  \diag(\pi - \pi') (P - P')
  = \mx{\diag(\pi_U) & 0 \\ 0 & 0} \mx{0 & 0 \\ P_{MU} & P_{MM} - I}
  = 0
\end{equation}
which gives us an analogue of \Eq{primes}:
\begin{equation}
  \diag(\pi') P + \diag(\pi) P' = \diag(\pi) P + \diag(\pi') P'.
  \label{eq:primes-diag}
\end{equation}
Here the right-hand side is symmetric due to reversibility of $P$ and $P'$, thus so is the left-hand side. Using this we can check that $P(s)$ is reversible:
\begin{align}
&  \diag \bigl( (1-s) \pi + s \pi' \bigr) \bigl( (1-s) P + s P' \bigr) \\
&= (1-s)^2 \diag(\pi) P + (1-s)s \bigl( \diag(\pi) P' + \diag(\pi') P \bigr) + s^2 \diag(\pi') P'
\end{align}
where the first and last terms are symmetric since $P$ and $P'$ are reversible, but the middle term is symmetric due to \Eq{primes-diag}.
\end{proof}

\subsection{Discriminant matrix} \label{apx:Discriminant}

Recall from \Def{Discriminant} that the \emph{discriminant matrix\index{Markov chain!discriminant matrix of}} of a Markov chain $P(s)$ is
\begin{equation}
  D(s) := \sqrt{P(s) \circ P(s)\tp},
  \label{eq:D(s)}
\end{equation}
where the Hadamard product ``$\circ$'' and the square root are computed entry-wise. This matrix was introduced by Szegedy in~\cite{Sze,Sze-arXiv}. We prefer to work with $D(s)$ rather than $P(s)$ since the matrix of transition probabilities is not necessarily symmetric while its discriminant matrix is.

\begin{proposition}\label{prop:Discriminant}
If $P$ is reversible then
\begin{align}
  D(s) &= \diag \bigl( \! \sqrt{\pi(s)} \, \bigr) \: P(s) \:
          \diag \bigl( \! \sqrt{\pi(s)} \, \bigr)^{-1},
  \quad \quad \forall s \in \co{0}{1};
  \label{eq:D(s) and P(s)}
  \\[5pt]
  D(1) &= \mx{
    \diag \bigl( \! \sqrt{\pi_U} \, \bigr) \: P_{UU} \:
    \diag \bigl( \! \sqrt{\pi_U} \, \bigr)^{-1} & 0 \\
    0 & I
  }.
  \label{eq:D(1)}
\end{align}
\end{proposition}

Here the square roots are also computed entry-wise and $M^{-1}$ denotes the matrix inverse of $M$. Notice that for $s \in \co{0}{1}$ the right-hand side of \Eq{D(s) and P(s)} is well-defined, since $P(s)$ is ergodic by \Prop{Ergodicity} and thus according to the \PFT{} has a unique and non-vanishing stationary distribution. However, recall from \Prop{Stationary} that $\pi(1)$ vanishes on $U$, so the right-hand side of \Eq{D(s) and P(s)} is no longer well-defined at $s = 1$. For this reason we have an alternative expression for $D(1)$.

\begin{proof}[Proof (of {\Prop{Discriminant}})]
For a reversible Markov chain $P$ the interpolated detailed balance condition in \Eq{Interpolated detailed balance} implies that $D_{xy}(s) = \sqrt{P_{xy}(s) P_{yx}(s)} = P_{xy}(s) \sqrt{\pi_x(s) / \pi_y(s)}$. This is equivalent to \Eq{D(s) and P(s)}.

At $s=1$ from \Eq{D(s)} we have:
\begin{equation}
  D(1) = \sqrt{P(1) \circ P(1)\tp}
       = \sqrt{\mx{P_{UU} \circ P_{UU}\tp & 0 \\ 0 & I}}
       = \mx{\sqrt{P_{UU} \circ P_{UU}\tp} & 0 \\ 0 & I}.
  \label{eq:D(1) circ}
\end{equation}
It remains to verify that the upper left block of $D(1)$ agrees with \Eq{D(1)}. Using \Eq{D(s)} we compute that
\begin{equation}
  D_{UU}(s)
  = \sqrt{P_{UU} \circ P_{UU}\tp}
  = D_{UU}(0)
  = \diag \bigl( \! \sqrt{\pi_U} \, \bigr) \: P_{UU} \:
    \diag \bigl( \! \sqrt{\pi_U} \, \bigr)^{-1}
  \label{eq:DUU}
\end{equation}
where the last equality follows from \Eq{D(s) and P(s)} at $s = 0$. Together with \Eq{D(1) circ} this gives us the desired expression in \Eq{D(1)}.
\end{proof}

\subsubsection{Spectral decomposition} \label{apx:D(s) spectrum}

Recall from \Eq{D(s)} that $D(s)$ is real and symmetric. Therefore, its eigenvalues are real and it has an orthonormal set of real eigenvectors. Let
\begin{equation}
  D(s) = \sum_{i=1}^n \lambda_i(s) \ket{v_i(s)} \bra{v_i(s)}
  \label{eq:D(s) spectrum}
\end{equation}
be the spectral decomposition of $D(s)$ with eigenvalues $\lambda_i(s)$ and eigenvectors\footnote{There is no need to use bra-ket notation at this point; nevertheless we adopt it since vectors $\ket{v_i(s)}$ later will be used as quantum states.} $\ket{v_i(s)}$. Moreover, let us arrange the eigenvalues so that
\begin{equation}
  \lambda_1(s) \leq \lambda_2(s) \leq \dots \leq \lambda_n(s).
\end{equation}

From now on we will assume that $P$ is reversible (and hence ergodic) without explicitly mentioning it. Under this assumption the matrices $P(s)$ and $D(s)$ are similar (see \Prop{Eigenvalues} below). This means that $D(s)$ essentially has the same properties as $P(s)$, but in addition it also admits a spectral decomposition with orthogonal eigenvectors. This will be very useful in \Apx{Spectrum of W(s)}, where we find the spectral decomposition of the quantum walk operator $W(s)$ in terms of that of $D(s)$, and use it to relate properties of $W(s)$ and $P(s)$.

\begin{proposition}\label{prop:Eigenvalues}
Assume $P$ is reversible. The matrices $P(s)$ and $D(s)$ are similar for any $s \in [0,1]$ and therefore have the same eigenvalues. In particular, the eigenvalues of $P(s)$ are real.
\end{proposition}

\begin{proof}
From \Eq{D(s) and P(s)} we see that the matrices $D(s)$ and $P(s)$ are similar for $s \in \co{0}{1}$. From \Eq{D(1)} we see that $D(1)$ is similar to $\tilde{P} := \smx{P_{UU}&0\\0&I}$. To verify that $\tilde{P}$ and $P(1) = \smx{P_{UU} & P_{UM} \\ 0 & I}$ are similar, let $M := \smx{P_{UU}-I & P_{UM} \\ 0 & I}$. One can check that $M P(1) M^{-1} = \tilde{P}$ where $M^{-1} = \smx{(P_{UU}-I)^{-1} & -(P_{UU}-I)^{-1} P_{UM} \\ 0 & I}$ exists, since $P_{UU} - I$ is invertible according to \Prop{Invertible}. By transitivity, $D(1)$ is also similar to $P(1)$.
\end{proof}

\begin{proposition}\label{prop:Multiplicity}
The largest eigenvalue of $D(s)$ is $1$. It has multiplicity $1$ when $s \in \co{0}{1}$ and multiplicity $m$ when $s = 1$. In other words,
\begin{align}
  \lambda_{n-1}(s) < \lambda_n(s) = 1&, \quad \forall s \in \co{0}{1}, \label{eq:lambdas} \\
  \lambda_{n-\m}(1) < \lambda_{n-\m+1}(1) = \dots = \lambda_n(1) = 1&.
\end{align}
\end{proposition}

\begin{proof}
Let us argue about $P(s)$, since it has the same eigenvalues as $D(s)$ by \Prop{Eigenvalues}. From the \PFT{} we have that $\forall i: \lambda_i(s) \leq 1$ and $\lambda_n(s) = 1$. In addition, by \Prop{Ergodicity} the Markov chain $P(s)$ is ergodic for any $s \in \co{0}{1}$, so $\forall i \neq n: \lambda_i(s) < 1$. Finally, note by \Eq{D(1)} that for $s = 1$ eigenvalue $1$ has multiplicity at least $m$. Recall from \Eq{DUU} that $D_{UU}(1)$ and $P_{UU}$ are similar. From \Prop{Invertible} we conclude that all eigenvalues of $P_{UU}$ are strictly less than $1$. Thus the multiplicity of eigenvalue $1$ of $D(1)$ is exactly $m$.
\end{proof}

\subsubsection{Principal eigenvector} \label{apx:Rotation}

Let us prove an analogue of \Prop{Stationary} for the matrix $D(s)$.

\begin{proposition}\label{prop:Principal eigenvector}
$\sqrt{\pi(s)\tp}$ is the unique $(+1)$-eigenvector of $D(s)$ for $s \in \co{0}{1}$. At $s = 1$ any vector with support only on marked states is a $(+1)$-eigenvector, including $\sqrt{\pi(1)\tp}$.
\end{proposition}

\begin{proof}
Since $P(s)$ is row-stochastic, $P(s) \: 1_X\tp = 1_X\tp$ where $1_X$\label{math:1X} is the all-ones row vector. Thus we can check that for $s \in \co{0}{1}$,
\begin{align}
  D(s) \sqrt{\pi(s)\tp}
  &= \diag \Bigl( \! \sqrt{\pi(s)} \, \Bigr) \: P(s) \:
     \diag \Bigl( \! \sqrt{\pi(s)} \, \Bigr)^{-1} \sqrt{\pi(s)\tp} \\
  &= \diag \Bigl( \! \sqrt{\pi(s)} \, \Bigr) \: P(s) \: 1_X\tp \\
  &= \diag \Bigl( \! \sqrt{\pi(s)} \, \Bigr) \: 1_X\tp \\
  &= \sqrt{\pi(s)\tp}. \label{eq:1-eigenvector of D(s)}
\end{align}
Uniqueness for $s \in \co{0}{1}$ follows by the uniqueness of $\pi(s)$ and \Prop{Eigenvalues}. For the $s = 1$ case, notice from \Eq{D(1)} that $D(1)$ acts trivially on marked elements and recall from \Eq{pi(s)} that $\pi(1) = \rv{0_U}{\pi_M} / p_M$.
\end{proof}

According to the above Proposition, for any $s \in [0,1]$ we can choose the principal eigenvector $\ket{v_n(s)}$ in the spectral decomposition of $D(s)$ in \Eq{D(s) spectrum} to be
\begin{equation}
  \ket{v_n(s)} := \sqrt{\pi(s)\tp}.
  \label{eq:vn(s)}
\end{equation}
We would like to have an intuitive understanding of how $\ket{v_n(s)}$ evolves as a function of $s$. Let us introduce some useful notation that we will also need later.

Let $0_U$ and $1_U$ (respectively, $0_M$ and $1_M$) be the all-zeros and all-ones row vectors of dimension $n-\m$ (respectively, $\m$) whose entries are indexed by elements of $U$ (respectively, $M$). Furthermore, let
\begin{align}
  \tilde{\pi}_U &:= \pi_U/(1-p_M), &
  \tilde{\pi}_M &:= \pi_M/p_M
  \label{eq:piUM}
\end{align}
be the normalized row vectors describing the stationary distribution $\pi$ restricted to unmarked and marked states. Let us also define the following unit vectors in $\R^n$:
\begin{align}
  \ket{U}
   &:= \sqrt{\rv{\tilde{\pi}_U}{0_M}\tp}
     = \frac{1}{\sqrt{1-p_M}}\sum_{x \in U} \sqrt{\pi_x} \ket{x}, \label{eq:U} \\
  \ket{M}
   &:= \sqrt{\rv{0_U}{\tilde{\pi}_M}\tp}
     = \frac{1}{\sqrt{p_M}}\sum_{x \in M} \sqrt{\pi_x} \ket{x}. \label{eq:M}
\end{align}
Then we can express $\ket{v_n(s)}$ as a linear combination of $\ket{U}$ and $\ket{M}$.
\propvn*

\begin{proof}
By substituting $\pi(s)$ from \Eq{pi(s)} into \Eq{vn(s)} we get
\begin{equation}
  \ket{v_n(s)}
  = \sqrt{\pi(s)\tp}
  = \sqrt{\frac{\Rv{(1-s)\pi_U}{\pi_M}\tp}{1 - s (1-p_M)}}
  = \sqrt{\frac{\Rv{(1-s)(1-p_M)\tilde{\pi}_U}{p_M\tilde{\pi}_M}\tp}{1 - s (1-p_M)}}
\end{equation}
which is the desired expression.
\end{proof}

Thus $\ket{v_n(s)}$ lies in the two-dimensional subspace $\spn \set{\ket{U}, \ket{M}}$ and is subject to a rotation as we change the parameter $s$ (see \Fig{Rotation}). In particular,
\begin{align}
  \ket{v_n(0)} &= \sqrt{1-p_M} \ket{U} + \sqrt{p_M} \ket{M}, &
  \ket{v_n(1)} &= \ket{M}.
  \label{eq:vn(0) and vn(1)}
\end{align}

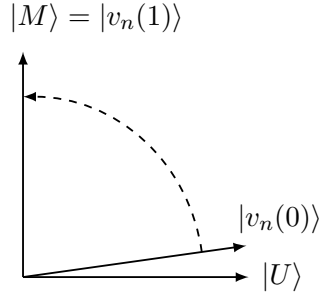
\begin{figure}[th]
  \centering

\begin{tikzpicture}
  [line width = 0.7pt,
   arc/.style = {-latex}]

  \def\a{8} 
  \def\r{3.0} 

  \draw [arc] (0,0) to (\r, 0); 
  \draw [arc] (0,0) to (0, \r); 
  \draw [arc] (0,0) to (\a:\r); 

  \def\c{0.8} 

  \draw [arc, dashed] (\a:\c*\r) to [out = 90+\a, in = 0] (0, \c*\r);

  \def\z{1.15} 

  \draw (\z*\r, 0) node {$\ket{U}$};
  \draw (0, \z*\r)+(0.98,0) node {$\ket{M} = \ket{v_n(1)}$};
  \draw (\a:\z*\r)+(0,0.30) node {$\ket{v_n(0)}$};

\end{tikzpicture}

  \caption[Rotation of $\ket{v_n(s)}$ in a two-dimensional subspace]{As $s$ changes from zero to one, the evolution of the principal eigenvector $\ket{v_n(s)}$ corresponds to a rotation in the two-dimensional subspace $\spn \set{\ket{U}, \ket{M}}$.}
  \label{fig:Rotation}
\end{figure}

\begin{proposition}\label{prop:theta(s) dot}
$\theta(s)$ and its derivative $\dot{\theta}(s) := \frac{d}{ds} \theta(s)$ are related as follows:
\begin{equation}
  2 \dot{\theta}(s) = \frac{\sin \theta(s) \cos \theta(s)}{1-s}.
  \label{eq:theta(s) dot}
\end{equation}
\end{proposition}

\begin{proof}
Notice that
\begin{equation}
  \frac{d}{ds} \bigl( \sin^2 \theta(s) \bigr) = 2 \dot{\theta}(s) \sin \theta(s) \cos \theta(s).
\end{equation}
On the other hand, according to \Eq{cos and sin theta} we have
\begin{equation}
  \frac{d}{ds} \bigl( \sin^2 \theta(s) \bigr)
= \frac{d}{ds} \biggl( \frac{p_M}{1-s(1-p_M)} \biggr)
= \frac{p_M (1-p_M)}{(1-s(1-p_M))^2}
= \frac{\sin^2 \theta(s) \cos^2 \theta(s)}{1-s}.
\end{equation}
By comparing both equations we get the desired result.
\end{proof}

\subsubsection{Derivative}

\begin{proposition}\label{prop:D(s) dot}
$D(s)$ and its derivative $\dot{D}(s) := \frac{d}{ds} D(s)$ are related as follows:
\begin{equation}
  \dot{D}(s) = \frac{1}{2(1-s)} \bigl\{ \Pi_M, I-D(s) \bigr\}
\end{equation}
where $\{X,Y\} := XY + YX$ is the anticommutator of $X$ and $Y$, and $\Pi_M := \sum_{x \in M} \proj{x}$ is the projector onto the $m$-dimensional subspace spanned by marked states $M$.
\end{proposition}

\begin{proof}
Recall from \Eq{D(s)} that $D(s) = \sqrt{P(s) \circ P(s)\tp}$. The block structure of $P(s)$ is given in \Eq{Block P(s)}. First, let us derive an expression for $D_{MM}(s)$, the lower right block of $D(s)$:
\begin{align}
  D_{MM}(s)
  &= \sqrt{P_{MM}(s) \circ P_{MM}(s)\tp} \\
  &= \sqrt{\bigl( (1-s) P_{MM}    + s I \bigr) \circ
           \bigl( (1-s) P_{MM}\tp + s I \bigr)}.
\end{align}
Let us separately consider the diagonal and off-diagonal entries of $D_{MM}(s)$. For $x, y \in M$ we have
\begin{equation}
  D_{xy}(s) =
  \begin{cases}
    (1-s) \sqrt{P_{xy} P_{yx}} & \text{if $x \neq y$}, \\
    (1-s) P_{xx} + s & \text{if $x = y$}.
  \end{cases}
\end{equation}
Thus we can write $D_{MM}(s)$ as
\begin{equation}
  D_{MM}(s) = (1-s) \sqrt{P_{MM}\circ P\tp_{MM}} + s I.
\end{equation}

Expressions for the remaining blocks of $D(s)$ can be derived in a straightforward way. By putting all blocks together we get
\begin{equation}
  D(s) = \mx{
    \sqrt{P_{UU} \circ P\tp_{UU}} & \sqrt{(1-s) (P_{UM} \circ P\tp_{MU})} \\
    \sqrt{(1-s) (P_{MU} \circ P\tp_{UM})} & (1-s) \sqrt{P_{MM}\circ P\tp_{MM}} + s I }.
  \label{eq:D(s) blocks}
\end{equation}
When we take the derivative with respect to $s$ we find
\begin{equation}
  \dot{D}(s) = \mx{
    0 & -\frac{1}{2\sqrt{1-s}} \sqrt{P_{UM} \circ P\tp_{MU}} \\
    -\frac{1}{2\sqrt{1-s}} \sqrt{P_{MU} \circ P\tp_{UM}} & I - \sqrt{P_{MM}\circ P\tp_{MM}}}.
  \label{eq:D(s) dot}
\end{equation}

To relate $\dot{D}(s)$ and the original matrix $D(s)$, observe that
\begin{equation}
  \Pi_M D(s) + D(s) \Pi_M
  = \mx{
    0 & \sqrt{(1-s) (P_{UM} \circ P\tp_{MU})} \\
    \sqrt{(1-s) (P_{MU} \circ P\tp_{UM})} & 2 (1-s) \sqrt{P_{MM}\circ P\tp_{MM}} + 2 s I }
\end{equation}
which can be seen by overlaying the second column and row of $D(s)$ given in \Eq{D(s) blocks}. When we rescale this by an appropriate constant, we get
\begin{equation}
  - \frac{1}{2(1-s)} \{\Pi_M, D(s)\} = \mx{
      0 &
      -\frac{1}{2\sqrt{1-s}} \sqrt{P_{UM} \circ P\tp_{MU}} \\
      -\frac{1}{2\sqrt{1-s}} \sqrt{P_{MU} \circ P\tp_{UM}} &
      -\sqrt{P_{MM}\circ P\tp_{MM}} - \frac{s}{1-s} I
    }.
\end{equation}
This is very similar to the expression for $\dot{D}(s)$ in \Eq{D(s) dot}, except for a slightly different coefficient for the identity matrix in the lower right corner. We can correct this by adding $\Pi_M$ with an appropriate constant: $-\frac{1}{2(1-s)} \{\Pi_M,D(s)\} + \frac{1}{1-s} \Pi_M = \dot{D}(s)$.
\end{proof}

\subsection{Hitting time} \label{apx:HT}

From now on we assume that $P$ is ergodic and reversible. Recall from \Def{HT} that $\HT(P,M)$ is the expected number of steps it takes for the \RWA{} to find a marked vertex, starting from the stationary distribution of $P$ restricted to unmarked vertices.
We now prove \Prop{HT(P,M)} which expresses the hitting time of $P$ in terms of the spectral properties of the discriminant matrix of the absorbing walk $P'.$

\hittingtime*

\begin{proof}
The expected number of iterations in the \RWA{} is
\begin{align}
  \HT(P,M)
  &:= \sum_{l=1}^\infty l \cdot \Pr[\text{need \emph{exactly} $l$ steps}] \\
  & = \sum_{l=1}^\infty \sum_{t=1}^l \Pr[\text{need \emph{exactly} $l$ steps}] \label{eq:lt} \\
  & = \sum_{t=1}^\infty \sum_{l=t}^\infty \Pr[\text{need \emph{exactly} $l$ steps}] \label{eq:tl} \\
  &   \tikz[remember picture] \path (0,0) coordinate (O);
    = \sum_{t=1}^\infty \Pr[\text{need \emph{at least} $t$ steps}] \\
  & = \sum_{t=0}^\infty \Pr[\text{need \emph{more} than $t$ steps}]. \label{eq:HT sum}
\end{align}
The region corresponding to the double sums in Eqs.~\EqRef{lt} and \EqRef{tl} is shown in \Fig{Sums}.

\begin{figure}
\begin{center}

\begin{tikzpicture}
  [line width = 0.7pt,
   arc/.style = {-latex}]

  \def\W{4} 
  \def\H{4} 

  \def\d{0.6} 

  \def\rx{\W*\d} 
  \def\ry{\H*\d} 


  \path[fill = black!30]
    (0*\d,0*\d) -- (0*\d,1*\d) --
    (1*\d,1*\d) -- (1*\d,2*\d) --
    (2*\d,2*\d) -- (2*\d,3*\d) --
    (3*\d,3*\d) -- (3*\d,4*\d) --
    (4*\d,4*\d) -- (4*\d,4.5*\d) --
    (4.5*\d,4.5*\d) -- (4.5*\d,0) -- cycle;


  \def\a{1.4} 

  \draw [arc] (0,0) to (\rx+\a*\d, 0); 
  \draw [arc] (0,0) to (0, \ry+\a*\d); 

  \def\z{1.8} 

  \draw (\rx+\z*\d, 0) node {$l$};
  \draw (0, \ry+\z*\d) node {$t$};


  \foreach \x in {1,...,\W} {
    \draw (\x*\d,0) -- (\x*\d,\ry+\d/2);
    \draw ({(\x-1/2)*\d},-\d/2) node {\footnotesize$\x$};
  }
    \draw ({(\W+1/2)*\d},-\d/2) node {\footnotesize$\ldots$};


  \foreach \y in {1,...,\H} {
    \draw (0,\y*\d) -- (\rx+\d/2,\y*\d);
    \draw (-\d/2,{(\y-1/2)*\d}) node {\footnotesize$\y$};
  }
    \draw (-\d/2,{(\H+1/2)*\d}) node {\footnotesize$\vdots$};

\end{tikzpicture}

\end{center}
\caption{Range of variables $l$ and $t$ in the double sums of Eqs.~\EqRef{lt} and \EqRef{tl}.\label{fig:Sums}}
\end{figure}
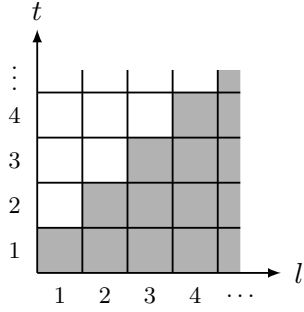 

It remains to determine the probability that no marked vertex is found after $t$ steps, starting from an unmarked vertex distributed according to $\tilde{\pi}_U = \pi_U/(1-p_M)$. The distribution of vertices at the first execution of \step{walk} of the \RWA{} is $\rv{\tilde{\pi}_U}{0_M}$, hence
\begin{equation}
  \Pr[\text{need \emph{more} than $t$ steps}]
  = \rv{\tilde{\pi}_U}{0_M} P'^{\,t} \rv{1_U}{0_M}\tp.
  \label{eq:Pr(more than t steps)}
\end{equation}
Recall from \Prop{P' to t} that $(P'^{\,t})_{UU} = P_{UU}^t$ so we can simplify \Eq{Pr(more than t steps)} as follows:
\begin{align}
 \Pr[\text{need \emph{more} than $t$ steps}] 
&= \rv{\tilde{\pi}_U}{0_M} P'^{\,t} \rv{1_U}{0_M}\tp \\
&= \frac{\pi_U}{1-p_M} P_{UU}^t 1_U\tp \\
&= \sqrt{\tfrac{\pi_U}{1-p_M}}
   \diag \bigl( \! \sqrt{\pi_U} \, \bigr) P_{UU}^t
   \diag \bigl( \! \sqrt{\pi_U} \, \bigr)^{-1}
   \sqrt{\tfrac{\pi_U\tp}{1-p_M}} \\
&= \bra{U} D'^t \ket{U},
\end{align}
where the last equality follows from the expression for the discriminant matrix $D'=D(1)$ in \Eq{D(1)}. By plugging this back in \Eq{HT sum} we get
\begin{equation}
  \HT(P,M) = \sum_{t=0}^\infty \bra{U} D'^t \ket{U}.
  \label{eq:HT(P,M) series}
\end{equation}
From the spectral decomposition $D'=\sum_{k=1}^n\lambda_k'\ket{v_k'}$, this may be rewritten as
\begin{equation}
  \HT(P,M) = \sum_{t=0}^\infty \sum_{k=1}^n
             \lambda_k'^t \abs{\braket{v_k'}{U}}^2.
  \label{eq:HT(P,M) double sum}
\end{equation}
Let $m := \abs{M}$ be the number of marked elements. Recall from \Eq{D(1)} that $D'=D(1)$ is block-diagonal and acts as identity matrix in the $m$-dimensional marked subspace. Furthermore, all $1$-eigenvectors of $D'$ lie in the marked subspace, since eigenvalue $1$ has multiplicity $m$ (recall from \Prop{Multiplicity} that $\lambda_k' = 1$ when $k > n - m$). Therefore, the terms in \Eq{HT(P,M) double sum} with $k > n - m$ disappear since $\braket{v'_k}{U} = 0$, and we get the desired expression by exchanging the two sums in \Eq{HT(P,M) double sum} and using the expansion $(1-x)^{-1} = \sum_{t=0}^\infty x^t$ where $\abs{x} < 1$.
\end{proof}

Note that the two sums in \Eq{HT(P,M) double sum} may not be exchanged before removing the terms with $k> n - m$: they do not commute in the presence of these extra terms since $\lambda_k'=1$ for $k> n - m$ and therefore $\sum_{t=0}^\infty \abs{\lambda_k'}^t$ diverges. This subtlety had unfortunately been overlooked in~\cite{KOR,KMOR}, and is at the source of the distinction between the hitting time $\HT(P,M)$ and the extended hitting time $\limHT$ (see \Apx{HT comparison}).

\subsubsection{Extended hitting time} \label{apx:HT(s)}

Recall the definition of the extended hitting time.

\deflimHT*

We now prove that the extended hitting time reduces to the usual hitting time in the case of a single marked element, even though they may differ in general.

\htcontinuity*

\begin{proof}
The fact that $\limHT=\HT(P,M)$ when $|M|=1$ follows immediately from the expression for $\HT(P,M)$ in \Prop{HT(P,M)} and \Def{HT(s)}.

For the second part, choose
\begin{equation}
  P = \frac{1}{4}
      \mx{3 & 1 & 0 \\
          1 & 2 & 1 \\
          0 & 1 &3}
\end{equation}
and let the last two elements be marked. If we explicitly compute the eigenvalues and eigenvectors of $D(s)$, then from \Def{HT(s)} we get that $\HT(s) = \frac{20}{(3-s)^2}$ for $s \in [0,1)$ and thus $\limHT = 5$. However, $\HT(P,M) = 4$. One can also use the formulas from \Lem{HT comparison} in \Apx{HT comparison} to verify this.
\end{proof}

This proposition implies that in the case of a single marked element, the quantum search algorithms in \Sect{Search algorithms} provide a quadratic speedup over the classical hitting time. In the general case of multiple marked elements, these quantum algorithms still solve the search problems but their cost is given in terms of the extended hitting time rather than the standard one.

\subsubsection{Lazy walk} \label{apx:Lazy}

For technical reasons, in \Sect{Search algorithms} it is important that all eigenvalues of $P(s)$ are non-negative. We can guarantee this using a standard trick---replacing the original Markov chain $P$ with a ``lazy'' walk $(P+I)/2$ where $I$ is the $n \times n$ identity matrix. In fact, we can assume without loss of generality that the original Markov chain already is ``lazy'', since this affects the hitting time only by a constant factor, as shown below.

\begin{proposition}\label{prop:Lazy}
Let $P$ be an ergodic and reversible Markov chain. Then for any $s \in [0,1]$ the eigenvalues of $(P(s)+I)/2$ are between $0$ and $1$. Moreover, if the interpolated hitting time of $P$ is $\HT(s)$, then the interpolated hitting time of $(P+I)/2$ is $2 \HT(s)$.
\end{proposition}

\begin{proof}
Since $P$ is reversible, so is $P(s)$ by \Prop{Reversibility}. Thus the eigenvalues of $P(s)$ are real by \Prop{Eigenvalues}. If $\lambda_k(s)$ is an eigenvalue of $P(s)$ then $\lambda_k(s) \in [-1,1]$ according to \PFT{}. Thus, the eigenvalues of $(P(s)+I)/2$ satisfy $(\lambda_k(s)+1)/2 \in [0,1]$.

Recall from \Prop{Eigenvalues} that $P(s)$ and $D(s)$ are similar. Thus, the discriminant matrix of $(P(s)+I)/2$ is $(D(s)+I)/2$, which has the same eigenvectors as $D(s)$. By \Def{HT(s)}, the interpolated hitting time of $(P(s)+I)/2$ is
\begin{equation}
  \sum_{k=1}^{n-1} \frac{\abs{\braket{v_k(s)}{U}}^2}{1 - \frac{\lambda_k(s)+1}{2}}.
\end{equation}
Since $1 - \frac{\lambda_k(s)+1}{2} = \frac{1-\lambda_k(s)}{2}$, the above expression is equal to $2 \HT(s)$ as claimed.
\end{proof}

\subsubsection{Relationship between \texorpdfstring{$\HT(s)$}{HT(s)} and \texorpdfstring{$\limHT$}{HT(1)}}
\label{apx:relation between HT(s) and HT(1)}

In this section we express $\HT(s)$ as a function of $s$ and $\limHT$, which is the main result of this appendix. The main idea is to relate $\frac{d}{ds} \HT(s)$ to $\HT(s)$. When we solve the resulting differential equation, the boundary condition at $s = 1$ gives the desired result.

First, note that by \Def{HT(s)}, $\HT(s)$ may be written as $\HT(s) = \bra{U} A(s)\ket{U}$, where
\begin{equation}
  A(s) := \sum_{k = 1}^{n-1}
          \frac{\ket{v_k(s)}\bra{v_k(s)}}{1 - \lambda_k(s)}.
  \label{eq:A(s)}
\end{equation}
The following property of $A(s)$ will be useful on several occasions.

\begin{proposition}\label{prop:AM-AU}
$A(s) \ket{M} = - \frac{\cos \theta(s)}{\sin \theta(s)} A(s) \ket{U}$.
\end{proposition}

\begin{proof}
Recall from \Prop{Multiplicity} that $\lambda_n(s) = 1$, so $A(s) \ket{v_n(s)} = 0$ by definition. If we substitute $\ket{v_n(s)} = \cos \theta(s) \ket{U} + \sin \theta(s) \ket{M}$ from \Prop{vn(s)} in this equation, we get the desired formula.
\end{proof}

\begin{lemma}\label{lem:HT(s) dot}
For $s < 1$, the derivative of $\HT(s)$ is related to $\HT(s)$ as
\begin{equation}
  \frac{d}{ds} \HT(s) = \frac{2(1-p_M)}{1-s(1-p_M)} \HT(s)
\end{equation}
where $p_M$ is the probability to pick a marked state from the stationary distribution $\pi$ of $P$.
\end{lemma}

\begin{proof}
Recall that $\HT(s) = \bra{U} A(s) \ket{U}$ where $A(s)$ may be written as
\begin{equation}
  A(s) = B(s)^{-1} - \Pi_n(s) \text{\: where \:}
  B(s) := I - D(s) + \Pi_n(s), \:
  \Pi_n(s) := \proj{v_n(s)}.
  \label{eq:A,B,Pi}
\end{equation}
Recall from \Apx{D(s) spectrum} that $\ket{v_n(s)}$ is the unique $(+1)$-eigenvector of $D(s)$ for $s \in \co{0}{1}$, thus $B(s)$ is indeed invertible when $s$ is in this range.

From now on we will not write the dependence on $s$ explicitly. We will also often use $\dot{f}(s)$ as a shorthand form of $\frac{d}{ds} f(s)$. Let us start with
\begin{equation}
  \frac{d}{ds} \HT\ofs = \bra{U} \dot{A}\ofs \ket{U}
  \label{eq:HT(s) dot}
\end{equation}
and expand $\dot{A}$ using \Eq{A,B,Pi}. To find $\frac{d}{ds} (B^{-1})$, take the derivative of both sides of $B^{-1}\ofs B\ofs = I$ and get $\frac{d}{ds} (B\ofs^{-1}) \cdot B\ofs + B\ofs^{-1} \cdot \frac{d}{ds} B\ofs = 0$. Thus $\frac{d}{ds} (B\ofs^{-1}) = -B\ofs^{-1} \dot{B\ofs} B\ofs^{-1}$ and
\begin{equation}
  \dot{A}\ofs = - B\ofs^{-1} \dot{B}\ofs B\ofs^{-1} - \dot{\Pi}_n\ofs.
\end{equation}
Notice from \Eq{A,B,Pi} that $\dot{B}\ofs = -\dot{D}\ofs + \dot{\Pi}_n\ofs$, thus $\dot{A}\ofs = - B\ofs^{-1} ( - \dot{D}\ofs + \dot{\Pi}_n\ofs ) B\ofs^{-1} - \dot{\Pi}_n\ofs$ and $\frac{d}{ds} \HT\ofs = h_1 + h_2 + h_3$ where
\begin{align}
  h_1 &:= \bra{U} B\ofs^{-1} \dot{D}\ofs B\ofs^{-1} \ket{U}, \\
  h_2 &:=-\bra{U} B\ofs^{-1} \dot{\Pi}_n\ofs B\ofs^{-1} \ket{U}, \\
  h_3 &:=-\bra{U} \dot{\Pi}_n\ofs \ket{U}.
\end{align}
Let us evaluate each of these terms separately.

To evaluate the first term $h_1$, we substitute $\dot{D}\ofs = \frac{1}{2(1-s)} \bigl\{ \Pi_M, I - D\ofs \bigr\}$ from \Prop{D(s) dot} and replace $I - D\ofs$ by $B\ofs - \Pi_n\ofs$ according to \Eq{A,B,Pi}:
\begin{align}
  2(1-s) h_1
   &= \bra{U} B\ofs^{-1} \{\Pi_M, B\ofs-\Pi_n\ofs\} B\ofs^{-1} \ket{U} \\
   &= \bra{U} B\ofs^{-1} \bigl(\{\Pi_M,B\ofs\}-\{\Pi_M,\Pi_n\ofs\}\bigr) B\ofs^{-1} \ket{U} \\
   &= \bra{U} \{B\ofs^{-1},\Pi_M\} \ket{U} - \bra{U} B\ofs^{-1} \{\Pi_M,\Pi_n\ofs\} B\ofs^{-1} \ket{U}.
\end{align}
Recall that $\Pi_M = \sum_{x \in M} \proj{x}$ is the projector onto the marked states. Thus $\Pi_M \ket{U} = 0$ and the first term vanishes. Note that $B\ofs$ has the same eigenvectors as $D\ofs$. In particular, $B\ofs^{-1} \ket{v_n\ofs} = \ket{v_n\ofs}$ and thus $B\ofs^{-1} \Pi_n\ofs = \Pi_n\ofs = \Pi_n\ofs B\ofs^{-1}$. Using this we can expand the anti-commutator in the second term: $B\ofs^{-1} \{\Pi_M,\Pi_n\ofs\} B\ofs^{-1} = B\ofs^{-1} \Pi_M \Pi_n\ofs + \Pi_n\ofs \Pi_M B\ofs^{-1}$. Since all three matrices in this expression are real and symmetric and $\ket{U}$ is also real, both terms of the anti-commutator have the same contribution, so we get
\begin{equation}
  2(1-s) h_1 = -2 \bra{U} B\ofs^{-1} \Pi_M \Pi_n\ofs \ket{U}.
\end{equation}
Recall from \Prop{vn(s)} that $\ket{v_n\ofs} = \cos \theta\ofs \ket{U} + \sin \theta\ofs \ket{M}$, so we see that $\Pi_M \Pi_n\ofs \ket{U} = \Pi_M \ket{v_n\ofs} \cdot \braket{v_n\ofs}{U} = \sin \theta\ofs \ket{M} \cdot \cos \theta\ofs$. Moreover, $B\ofs^{-1} = A\ofs + \Pi_n\ofs$ according to \Eq{A,B,Pi}, so
\begin{equation}
 2(1-s) h_1 = -2 \sin\theta\ofs \cos\theta\ofs \bra{U} (A\ofs + \Pi_n\ofs) \ket{M}.
\end{equation}
Recall from \Prop{AM-AU} that $\sin \theta \bra{U} A \ket{M} = \cos \theta \bra{U} A \ket{U}$. To simplify the second term, notice that $\bra{U} \Pi_n\ofs \ket{M} = \braket{U}{v_n\ofs} \cdot \braket{v_n\ofs}{M} = \cos\theta\ofs \cdot \sin\theta\ofs$. When we put this together, we get
\begin{equation}
  2(1-s) h_1 = 2 \cos^2\theta\ofs \bra{U} A\ofs \ket{U} - 2 \sin^2\theta\ofs \cos^2 \theta\ofs
\end{equation}
or simply
\begin{equation}
  h_1 = \frac{\cos^2\theta\ofs}{1-s} \bigl( \bra{U} A\ofs \ket{U} - \sin^2\theta\ofs \bigr).
  \label{eq:h1}
\end{equation}

Let us now consider the second term $h_2 = -\bra{U} B\ofs^{-1} \dot{\Pi}_n\ofs B\ofs^{-1} \ket{U}$. First, we compute $\dot{\Pi}_n\ofs = \ket{\dot{v}_n\ofs} \bra{v_n\ofs} + \ket{v_n\ofs} \bra{\dot{v}_n\ofs}$. Using $B\ofs^{-1} \ket{v_n\ofs} = \ket{v_n\ofs}$ we get $B\ofs^{-1} \dot{\Pi}_n\ofs B\ofs^{-1} = B\ofs^{-1} \ket{\dot{v}_n\ofs} \bra{v_n\ofs} + \ket{v_n\ofs} \bra{\dot{v}_n\ofs} B\ofs^{-1}$. Since $\braket{v_n\ofs}{U} = \cos \theta\ofs$ we have
\begin{equation}
  h_2 = -2 \bra{U} B\ofs^{-1} \ket{\dot{v}_n\ofs} \cos \theta\ofs
\end{equation}
where the factor two comes from the fact that all vectors involved are real and matrix $B\ofs^{-1}$ is real and symmetric. Let us compute
\begin{equation}
  \ket{\dot{v}_n\ofs} = \dot{\theta}\ofs \bigl( -\sin \theta\ofs \ket{U} + \cos \theta\ofs \ket{M} \bigr).
\end{equation}
Notice that $\braket{v_n\ofs}{\dot{v}_n\ofs} = 0$ and thus $\Pi_n\ofs \ket{\dot{v}_n\ofs} = 0$. By substituting $B\ofs^{-1} = A\ofs + \Pi_n\ofs$ from \Eq{A,B,Pi} we get
\begin{equation}
  h_2 = -2 \bra{U} A\ofs \ket{\dot{v}_n\ofs} \cos \theta\ofs.
\end{equation}
Next, we substitute $\ket{\dot{v}_n\ofs}$ and get
\begin{equation}
  h_2 = -2 \dot{\theta}\ofs \bigl(
           - \sin\theta\ofs \bra{U} A\ofs \ket{U}
           + \cos\theta\ofs \bra{U} A\ofs \ket{M}
           \bigr) \cos\theta\ofs.
\end{equation}
Now we use \Prop{AM-AU} to substitute $A\ofs \ket{M}$ by $A\ofs \ket{U}$:
\begin{equation}
  h_2 = -2 \dot{\theta}\ofs \biggl(
           - \sin\theta\ofs
           - \frac{\cos^2\theta\ofs}{\sin\theta\ofs}
           \biggr) \bra{U} A\ofs \ket{U} \cos\theta\ofs
      =  2 \dot{\theta}\ofs
           \frac{\cos\theta\ofs}{\sin\theta\ofs}
           \bra{U} A\ofs \ket{U}.
\end{equation}
Finally, we substitute $2 \dot{\theta}\ofs = \frac{\sin\theta\ofs \cos\theta\ofs}{1-s}$ from \Eq{theta(s) dot} and get
\begin{equation}
  h_2 = \frac{\cos^2\theta\ofs}{1-s} \bra{U} A\ofs \ket{U}.
  \label{eq:h2}
\end{equation}

For the last term $h_3 = -\bra{U} \dot{\Pi}_n\ofs \ket{U}$ we observe that $\braket{U}{\dot{v}_n\ofs} \braket{v_n\ofs}{U} = - \dot{\theta}\ofs \sin\theta\ofs \cdot \cos\theta\ofs$ thus $h_3 = 2 \dot{\theta}\ofs \sin\theta\ofs \cos\theta\ofs$ where the factor two comes from symmetry. After substituting $2\dot{\theta}\ofs$ from \Eq{theta(s) dot} we get
\begin{equation}
  h_3 = \frac{\cos^2\theta\ofs}{1-s} \sin^2\theta\ofs.
  \label{eq:h3}
\end{equation}

When we compare Eqs.~\EqRef{h1}, \EqRef{h2}, and~\EqRef{h3} we notice that $h_2 = h_1 + h_3$. Thus the derivative of the hitting time is $\frac{d}{ds} \HT\ofs = h_1 + h_2 + h_3 = 2 h_2$. Recall from \Def{HT(s)} that $\HT\ofs = \bra{U} A\ofs \ket{U}$. Thus
\begin{equation}
  \frac{d}{ds} \HT(s) = 2 \frac{\cos^2\theta(s)}{1-s} \HT(s).
  \label{eq:diff-eq}
\end{equation}
By substituting $\cos\theta(s)$ from \Eq{cos and sin theta} we get the desired result.
\end{proof}

We now prove the following theorem which relates $\HT(s)$ to $\limHT$.

\HTintermsoflimHT*

\begin{proof}
When the marked element is unique, $\limHT = \HT(P,M)$ by \Prop{Continuity}. This gives the second part.

We will prove the first part by solving the differential equation obtained in \Lem{HT(s) dot}. Consider \Eq{diff-eq} and recall from \Eq{theta(s) dot} that $2 \dot{\theta}\ofs = \frac{\sin\theta\ofs \cos\theta\ofs}{1-s}$. We can rewrite the coefficient in \Eq{diff-eq} as
\begin{equation}
  2 \frac{\cos^2\theta\ofs}{1-s}
  = 2 \cdot \frac{\sin\theta\ofs \cos\theta\ofs}{1-s}
      \cdot \frac{\cos\theta\ofs}{\sin\theta\ofs}
  = 4 \dot{\theta}\ofs \frac{\cos\theta\ofs}{\sin\theta\ofs}
  = 4 \frac{\frac{d}{ds} (\sin\theta\ofs)}{\sin\theta\ofs}.
\end{equation}
Then the differential equation becomes
\begin{equation}
  \frac{\frac{d}{ds}\HT(s)}
                   {\HT(s)}
  = 4 \frac{\frac{d}{ds}(\sin\theta(s))}
                        {\sin\theta(s)}.
\end{equation}
By integrating both sides we get
\begin{equation}
  \ln \, \abs{\HT(s)} = 4 \ln \, \abs{\sin\theta(s)} + C
\end{equation}
for some constant $C$. Recall from \Eq{cos and sin theta} that $\sin\theta(1) = 1$, so the boundary condition at $s=1$ gives us $C = \ln \, \abs{\limHT}$. Since all quantities are non-negative, we can omit the absolute value signs. After exponentiating both sides we get
\begin{equation}
  \HT(s) = \sin^4 \theta(s) \cdot \limHT.
\end{equation}
We get the desired expression when we substitute $\sin\theta(s)$ from \Eq{cos and sin theta}.
\end{proof}

In \Sect{Search algorithms} we consider several quantum search algorithms whose running time depends on $\HT(s)$ for some values of $s$. \Thm{HT} is a crucial ingredient in analysis of these algorithms: when the marked element is unique, it expresses $\HT(s)$ as a function of $s$ and the usual hitting time $\HT(P,M)$. In particular, we see that $\HT(s)$ is monotonically increasing as a function of $s$ and it reaches maximum value at $s = 1$ (some example plots of $\HT(s)$ are shown in \Fig{HT(s)}). This observation is crucial, for example, in the proof of \Thm{Search with known HT}.

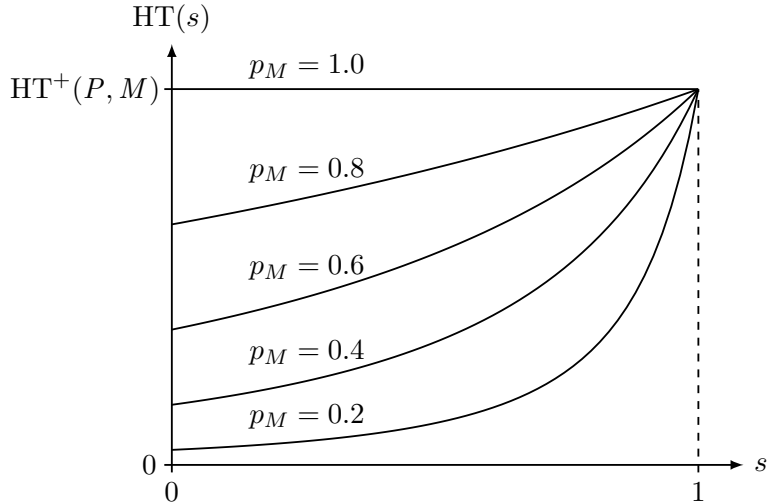
\begin{figure}[th]
  \centering

\def\zx{7.0}
\def\zy{5.0}

\begin{tikzpicture}
  [domain = 0:1, variable = \s, samples = 60,
   line width = 0.7pt,
   arc/.style = {-latex}]

  \def\d{0.1}
  \def\D{0.6}

  \draw[arc] (-\d,0) -- (\zx+\D,0) node[right] {$s$};
  \draw[arc] (0,-\d) -- (0,\zy+\D) node[above] {$\HT(s)$};

  \def\sm{0.3}

  \draw (-\d,\zy) to (\zx,\zy);
  \draw[dashed] (\zx,-\d) to (\zx,\zy);
  \draw (\zx*\sm, \zy) + (-0.3,0.3) node {$p_M = 1.0$};

  \foreach \p in {0.2,0.4,0.6,0.8} {
    \draw (\zx*\sm, { \zy * (\p / (1 - \sm * (1 - \p)))^2 }) + (-0.3,0.3) node {$p_M = \p$};
    \draw plot (\zx*\s, { \zy * (\p / (1 - \s * (1 - \p)))^2 });
  }

  \def\l{0.35}

  \draw (0  ,-\l) node {$0$};
  \draw (\zx,-\l) node {$1$};
  \draw (-\l,  0)+(+0.05,0) node {$0$};
  \draw (-\l,\zy)+(-0.80,0) node {$\limHT$};

\end{tikzpicture}

  \caption[The interpolated hitting time $\HT(s)$ as a function of $s$]{The interpolated hitting time $\HT(s)$ as a function of $s$ for several values of $p_M$ according to \Thm{HT}.}
  \label{fig:HT(s)}
\end{figure}

\section{Spectrum and implementation of \texorpdfstring{$W(s)$}{W(s)}} \label{apx:W(s)}

Szegedy~\cite{Sze} proposed a general method to map a random walk to a unitary operator that defines a quantum walk. The first step of Szegedy's construction is to map the rows of $P(s)$ to quantum states. Let $X$ be the state space of $P(s)$ and $\Hi := \spn \set{\ket{x} : x \in X}$ be a complex Euclidean space of dimension $n := \abs{X}$ with basis states labelled by elements of $X$. For every $x \in X$ we define the following state in $\Hi$:
 \begin{equation}
  \ket{p_x(s)} := \sum_{y \in X} \sqrt{P_{xy}(s)} \ket{y}.
\end{equation}
Notice that these states are correctly normalized, since $P(s)$ is row-stochastic. Following the approach of Szegedy~\cite{Sze}, we define a unitary operator $V(s)$ acting on $\Hi \x \Hi$ as
\begin{equation}
  V(s) \ket{x, \bar{0}} := \ket{x} \ket{p_x(s)} = \sum_{y \in X} \sqrt{P_{xy}(s)} \ket{x,y},
  \label{eq:V(s)}
\end{equation}
when the second register is in some reference state $\ket{\bar{0}} \in \Hi$, and arbitrarily otherwise. It will not be relevant to us how $V(s)$ is extended from $\Hi \otimes \ket{\bar{0}}$ to $\Hi \otimes \Hi$. The only constraint we impose is that $V(s)$ is continuous as a function of $s$, which is a reasonable assumption from a physical point of view.

Let \shift{} be the operation defined in \Eq{Shift}. Let $\Pi_0 := I \x \proj{\bar{0}}$ be the projector that keeps only the component containing the reference state $\ket{\bar{0}}$ in the second register and let $\reflex_{\X} := 2 \Pi_0 - I \x I$. The goal of this section is to find the spectral decomposition of the quantum walk operator corresponding to $P(s)$:
\begin{equation}
  W(s) := V(s)\ct \, \shift \, V(s) \cdot \reflex_{\X}
  \label{eq:W(s)}
\end{equation}
where $V(s) := V(P(s))$. Recall from \Apx{D(s) spectrum} that $\lambda_k(s)$ and $\ket{v_k(s)}$ are the eigenvalues and eigenvectors of the discriminant matrix $D(s)$ of $P(s)$. 

\subsection{Spectral decomposition of \texorpdfstring{$W(s)$}{W(s)}} \label{apx:Spectrum of W(s)}

In this section we determine the invariant subspaces of $W(s)$ and find its eigenvectors and eigenvalues. First, observe that on certain states $\shift$ acts as the swap gate.

\begin{proposition}\label{prop:Shift succeeds}
If $P$ is a Markov chain on graph $G$ then $\shift \, \ket{x, p_x(s)} = \ket{p_x(s), x}$, \ie, \shift{} always succeeds on states of the form $\ket{x, p_x(s)}$ for any $x \in X$.
\end{proposition}

\begin{proof}
From \Eq{V(s)} we get
\begin{align}
  \shift \, \ket{x, p_x(s)}
  &= \shift \, \sum_{y \in X} \sqrt{P_{xy}(s)} \ket{x, y} \\
  &= \sum_{y \in X} \sqrt{P_{xy}(s)} \ket{y, x} \\
  &= \ket{p_x(s), x},
\end{align}
where the second equality holds since $P(s)$ is a Markov chain on $G$ and thus $P_{xy}(s) = 0$ when $xy$ is not an edge of $G$.
\end{proof}

It follows from \Prop{Shift succeeds} that $\shift$ always succeeds when $V\ct(s) \, \shift \, V(s)$ acts on any state that has $\ket{\bar{0}}$ in the second register. In fact, we can say even more.

\begin{proposition}\label{prop:VSV and D}
If $P$ is a Markov chain on graph $G$ then the operator $V\ct(s) \, \shift \, V(s)$ acts as the discriminant matrix $D(s)$ (see \Apx{Discriminant}) when restricted to $\ket{\bar{0}}$ in the second register, \ie,
\begin{equation}
  \Pi_0 V\ct(s) \, \shift \, V(s) \Pi_0 = D(s) \x \proj{\bar{0}}.
\end{equation}
\end{proposition}

\begin{proof}
From \Eq{V(s)} and \Prop{Shift succeeds} we get
\begin{align}
  \bra{x,\bar{0}} V\ct(s) \, \shift \, V(s) \ket{y,\bar{0}}
  &= \bra{x, p_x(s)} \, \shift \, \ket{y, p_y(s)} \\
  &= \braket{x, p_x(s)}{p_y(s), y} \\
  &= \braket{p_x(s)}{y} \braket{x}{p_y(s)} \\
  &= \sqrt{P_{xy}(s) P_{yx}(s)} \\
  &= D_{xy}(s)
  \label{eq:VSV and D}
\end{align}
where last equality follows from \Eq{D(s)}.
\end{proof}

This suggests a close relationship between the operators $D(s)$ and $V\ct(s) \, \shift \, V(s)$. We want to extend this and relate the spectral decompositions of $D(s)$ and $W(s)$ from \Eq{W(s)}. Recall from \Eq{D(s) spectrum} that the spectral decomposition of $D(s)$ is $D(s) = \sum_{i=1}^n \lambda_i(s) \ket{v_i(s)} \bra{v_i(s)}$.

\begin{definition}\label{def:Subspaces}
We define the following subspaces of $\Hi \x \Hi$ in terms of the eigenvectors of $D(s)$ and the operator $V\ct(s) \, \shift \, V(s)$:
\begin{align}
  \mathcal{B}_k(s) &:= \spn \set{\ket{v_k(s),\bar{0}}, V\ct(s) \, \shift \, V(s) \ket{v_k(s),\bar{0}}},
                       \quad k \in \set{1, \dotsc, n-1}, \\[2pt]
  \mathcal{B}_n(s) &:= \spn \set{\ket{v_n(s),\bar{0}}}, \\
  \mathcal{B}^\perp(s) &:= \textstyle \bigl( \bigoplus_{k=1}^n \mathcal{B}_k(s) \bigr)^\perp.
\end{align}
\end{definition}

Let us first understand how $V\ct(s) \, \shift \, V(s)$ acts on vectors defining the subspaces in \Def{Subspaces}.
Let us consider $s < 1$ and $k < n$. Then $\lambda_k(s) \neq 1$ by \Prop{Multiplicity}. By unitarity of $V\ct(s) \, \shift \, V(s)$ and \Prop{VSV and D},
\begin{equation}
  V\ct(s) \, \shift \, V(s) \ket{v_k(s), \bar{0}}
  = \lambda_k(s) \ket{v_k(s), \bar{0}}
  + \sqrt{1-\lambda_k(s)^2} \ket{v_k(s), \bar{0}}^\perp
    \label{eq:def-vk-perp}
\end{equation}
for some unit vector $\ket{v_k(s), \bar{0}}^\perp$ orthogonal to $\ket{v_k(s), \bar{0}}$ and lying in the subspace $\mathcal{B}_k(s)$. In particular, $\mathcal{B}_k(s)$ is two-dimensional. Note that $\ket{v_k(s), \bar{0}}^\perp$ depends on how the operator $V(s)$, defined in \Eq{V(s)}, is extended to the rest of the space $\Hi \x \Hi$.

Let us also find how $V\ct(s) \, \shift \, V(s)$ acts on $\ket{v_k(s), \bar{0}}^\perp$. If we apply $V\ct(s) \, \shift \, V(s)$ to both sides of \Eq{def-vk-perp}, we get
\begin{equation}
  \ket{v_k(s), \bar{0}}
  = \lambda_k(s) V\ct(s) \, \shift \, V(s) \ket{v_k(s), \bar{0}}
  + \sqrt{1-\lambda_k(s)^2} V\ct(s) \, \shift \, V(s) \ket{v_k(s), \bar{0}}^\perp.
\end{equation}
We regroup the terms and substitute \Eq{def-vk-perp}:
\begin{align}
  &\sqrt{1-\lambda_k(s)^2} V\ct(s) \, \shift \, V(s) \ket{v_k(s), \bar{0}}^\perp \\
  &= \ket{v_k(s), \bar{0}}
  - \lambda_k(s) V\ct(s) \, \shift \, V(s) \ket{v_k(s), \bar{0}} \\
  &= \ket{v_k(s), \bar{0}}
  - \lambda_k(s) \Bigl(
      \lambda_k(s) \ket{v_k(s), \bar{0}} + \sqrt{1-\lambda_k(s)^2} \ket{v_k(s), \bar{0}}^\perp
    \Bigr).
\end{align}
After cancellation we get
\begin{equation}
  V\ct(s) \, \shift \, V(s) \ket{v_k(s), \bar{0}}^\perp
  = \sqrt{1-\lambda_k(s)^2} \ket{v_k(s), \bar{0}}
  - \lambda_k(s) \ket{v_k(s), \bar{0}}^\perp.
  \label{eq:vk-perp}
\end{equation}

\begin{proposition}\label{prop:Subspaces}
Subspaces $\mathcal{B}_1(s), \dotsc, \mathcal{B}_n(s)$, and $\mathcal{B}^\perp(s)$ are mutually orthogonal and invariant under $W(s)$ for all $s \in [0,1]$.
\end{proposition}

\begin{proof}
Clearly, $\mathcal{B}^\perp(s)$ is orthogonal to the other subspaces. Vectors $\ket{v_k(s),\bar{0}}$ are also mutually orthogonal for $k \in \set{1,\dotsc,n}$, since they form an orthonormal basis of $\Hi \x \ket{\bar{0}}$. Finally, note from \Prop{VSV and D} that
\begin{equation}
  \bra{v_j(s), \bar{0}} \cdot V\ct(s) \, \shift \, V(s) \ket{v_k(s), \bar{0}}
= \bra{v_j(s)} D(s) \ket{v_k(s)}
= \delta_{jk} \lambda_k(s),
\end{equation}
so $V\ct(s) \, \shift \, V(s) \ket{v_k(s), \bar{0}}$ is orthogonal to $\ket{v_j(s), \bar{0}}$ for any $j \neq k$. Thus all of the above subspaces are mutually orthogonal.

Let us show that these subspaces are invariant under $W(s)$. From the definition of $W(s)$ in \Eq{W(s)} we see that it suffices to check the invariance of each subspace under $V\ct(s) \, \shift \, V(s)$ and $\Pi_0$ separately.

First, let us argue the invariance under $V\ct(s) \, \shift \, V(s)$. Since $\shift^2$ acts as identity according to \Eq{Shift}, then so does $V\ct(s) \, \shift \, V(s)$ and hence $\mathcal{B}_k(s)$ is invariant under $V\ct(s) \, \shift \, V(s)$ for any $k < n$. Next, $\mathcal{B}_n(s)$ is invariant, since $V\ct(s) \, \shift \, V(s)$ acts trivially on $\ket{v_n(s), \bar{0}}$ by \Prop{VSV and D}. Finally, $\mathcal{B}^\perp(s)$ is invariant, since it is the orthogonal complement of invariant subspaces.

Let us now show the invariance under $\Pi_0$. First, let us argue that
\begin{equation}
  \braket{v_j(s), \bar{0}}{v_k(s), \bar{0}}^\perp = 0, \quad \forall j \in \set{1, \dotsc, n}.
\end{equation}
These vectors lie in subspaces $\mathcal{B}_j(s)$ and $\mathcal{B}_k(s)$ that are mutually orthogonal when $j \neq k$. For $j = k$ this holds by definition of $\ket{v_k(s), \bar{0}}^\perp$. Since $\spn \set{\ket{v_k(s), \bar{0}}}_{k=1}^n = \Hi \x \ket{\bar{0}}$, we conclude that
\begin{equation}
  \Pi_0 \ket{v_k(s), \bar{0}}^\perp = 0.
  \label{eq:Pi0}
\end{equation}
From \Eq{def-vk-perp} we get
\begin{equation}
  \Pi_0 V\ct(s) \, \shift \, V(s) \ket{v_k(s), \bar{0}} = \lambda_k(s) \ket{v_k(s), \bar{0}},
\end{equation}
hence $\mathcal{B}_k(s)$ is invariant under $\Pi_0$ for $k < n$. Next, $\mathcal{B}_n(s)$ is invariant since $\Pi_0 \ket{v_n(s), \bar{0}} = \ket{v_n(s), \bar{0}}$. Finally, $\mathcal{B}^\perp(s)$ is invariant by being the orthogonal complement of invariant subspaces.
\end{proof}

The following lemma by Szegedy~\cite{Sze} provides the spectral decomposition of $W(s)$ in terms of that of $D(s)$. Note that we can guarantee that all eigenvalues of $D(s)$ are in $[0,1]$ via \Prop{Lazy}.

\lemWspectrum*

\begin{proof}
Recall Eqs.~\EqRef{def-vk-perp} and~\EqRef{vk-perp}:
\begin{IEEEeqnarray}{lCl}
  V\ct(s) \, \shift \, V(s) \cdot \ket{v_k(s), \bar{0}}
 &=&\lambda_k(s) \ket{v_k(s), \bar{0}} + \sqrt{1-\lambda_k(s)^2} \ket{v_k(s), \bar{0}}^\perp,
    \label{eq:V Shift V vk} \\
  V\ct(s) \, \shift \, V(s) \cdot \ket{v_k(s), \bar{0}}^\perp
 &=&\sqrt{1-\lambda_k(s)^2} \ket{v_k(s), \bar{0}}
  - \lambda_k(s) \ket{v_k(s), \bar{0}}^\perp.
    \label{eq:V Shift V vk perp}
\end{IEEEeqnarray}
Clearly, $\reflex_{\X} \ket{v_k(s), \bar{0}} = \ket{v_k(s), \bar{0}}$ from \Eq{refX}, and recall from \Eq{Pi0} that $\Pi_0 \ket{v_k(s), \bar{0}}^\perp = 0$, so $\reflex_{\X} \ket{v_k(s), \bar{0}}^\perp = - \ket{v_k(s), \bar{0}}^\perp$. Thus, Eqs.~\EqRef{V Shift V vk} and~\EqRef{V Shift V vk perp} give us
\begin{IEEEeqnarray}{lCl}
  W(s) \cdot \ket{v_k(s), \bar{0}}
 &=&\lambda_k(s) \ket{v_k(s), \bar{0}} + \sqrt{1-\lambda_k(s)^2} \ket{v_k(s), \bar{0}}^\perp,
    \label{eq:W(s) vk} \\
  W(s) \cdot \ket{v_k(s), \bar{0}}^\perp
 &=& - \sqrt{1-\lambda_k(s)^2} \ket{v_k(s), \bar{0}} + \lambda_k(s) \ket{v_k(s), \bar{0}}^\perp.
    \label{eq:W(s) vk perp}
\end{IEEEeqnarray}
Recall from \Prop{Subspaces} that subspaces $\mathcal{B}_k(s)$ are mutually orthogonal and invariant under $W(s)$. In fact, $W(s)$ acts in the basis $\set{\ket{v_k(s), \bar{0}}, \ket{v_k(s), \bar{0}}^\perp}$ of $\mathcal{B}_k(s)$ as
\begin{equation}
  \mx{
            \lambda_k(s)    & -\sqrt{1-\lambda_k(s)^2} \\
    \sqrt{1-\lambda_k(s)^2} &          \lambda_k(s)
  }
  = \lambda_k(s) I + i \sqrt{1-\lambda_k(s)^2} \, \sigma_y
  \label{eq:Orthogonal rotation}
\end{equation}
where $\sigma_y := \smx{0 & -i \\ i & 0}$ is the Pauli $y$ matrix. The matrix in \Eq{Orthogonal rotation} has the same eigenvectors as $\sigma_y$ and its eigenvalues are given by
\begin{equation}
  \lambda_k(s) \pm i \sqrt{1-\lambda_k(s)^2} = e^{\pm i \varphi_k(s)}.
\end{equation}
This shows \Eq{Bk}. To obtain \Eq{Bn}, we use \Prop{VSV and D}:
\begin{equation}
  \bra{v_n(s), \bar{0}}
  \cdot V\ct(s) \, \shift \, V(s) \cdot
  \ket{v_n(s), \bar{0}} = 1,
\end{equation}
so $\ket{v_n(s), \bar{0}}$ is an eigenvector of $W(s)$ with eigenvalue $1$.
\end{proof}

\subsection{Quantum circuit for \texorpdfstring{$W(s)$}{W(s)}}
\label{apx:Circuit for W(s)}

Recall that $\update(P)$ can be used to implement the quantum walk operator $W(P)$. However, we would also like to be able to implement the quantum analogue of $P(s)$ for any $s \in [0,1]$. Recall from \Eq{W(s)} that it is given by
\begin{equation}
  W(s) = V(s)\ct \, \shift \, V(s) \cdot \reflex_{\X}.
\end{equation}
We know how to implement $\shift$ and $\reflex_{\X}$, so we only need to understand how to implement $V(s)$ using $V(P)$. Recall from \Eq{V(P)} that
\begin{equation}
  V(s) \ket{x} \ancilla = \ket{x} \ket{p_x(s)} = \ket{x} \sum_{y \in X} \sqrt{P_{xy}(s)} \ket{y}.
\end{equation}

In the following lemma, we assume that we know $p_{xx}$ for every $x$. This is reasonable since in practice the probability of self-loops is known. In many cases, it is even independent of $x$. For the rest of this chapter, we assume that this is not an obstacle (we can assume that one call to $\update(P)$ allows to learn $p_{xx}$ for any $x$).

\begin{lemma}\label{lem:Update}
Assuming that $p_{xx}$ is known for every $x$, \Itp$(P,M,s)$ implements $V(s)$ with quantum complexity $2 \checkingcost + \updatecost$. Thus, $\update(P(s))$ has quantum complexity of order $\checkingcost + \updatecost$.
\end{lemma}

\begin{proof}
We explain only how to implement $V(s)$ using one call to $V(P)$ and two calls to $\check(M)$. The algorithm for $V(s)\ct$ is obtained from the reverse algorithm.

Our algorithm uses four registers: $\Reg_1$, $\Reg_2$, $\Reg_3$, $\Reg_4$. The first two registers have underlying state space $\Hi$ each, but the last two store a qubit in $\C^2$ each. Register $\Reg_3$ is used to store if the current vertex $x$ is marked, but $\Reg_4$ is used for performing rotations. Let
\begin{equation}
  R_\alpha := \mx{\cos \alpha & -\sin \alpha \\ \sin \alpha & \cos \alpha}
\end{equation}
denote the rotation by angle $\alpha$. An algorithm for implementing the transformation $\ket{x} \ancilla \mapsto \ket{x} \ket{p_x(s)}$ is given below.

\begin{algobox}{0.9}
\Itptxt$(P,M,s)$\label{alg:Itp}
\begin{enumerate}
  \item Let the initial state be $\ket{x} \ancilla \ket{0} \ket{0}$.
  \item Apply $\check(M)$ to $\Reg_1 \Reg_3$ (then $\Reg_3 = 1$ if and only if $x \in M$).
  \item If $\Reg_3 = 0$, apply $V(P)$ to $\Reg_1 \Reg_2$ and get
        $\ket{x} \ket{p_x} \ket{0} \ket{0}$.
  \item Otherwise:\label{step:marked}
  \begin{enumerate}
    \item The state is $\ket{x} \ancilla \ket{1} \ket{0}$ where $x \in M$.
    \item Apply $R_\alpha$ with $\alpha = \arcsin \sqrt{s}$ on $\Reg_4$:
          $\ket{x} \ancilla \ket{1} (\sqrt{1-s} \ket{0} + \sqrt{s} \ket{1})$.
    \item If $\Reg_4 = 0$, apply $V(P)$ on $\Reg_1 \Reg_2$.
          Otherwise, use CNOT to copy $\Reg_1$ to $\Reg_2$ in the standard basis:
          $\ket{x} (\sqrt{1-s} \ket{p_x} \ket{1} \ket{0} + \sqrt{s} \ket{x} \ket{1} \ket{1})$.
    \item If $\Reg_1 = \Reg_2$, apply $R_\alpha$ with\label{step:marked-d}
          $\alpha = -\arcsin \sqrt{s/((1-s)P_{xx}+s)}$ to $\Reg_4$.
          Otherwise, do nothing: $\ket{x} \ket{p_x(s)} \ket{1} \ket{0}$.
  \end{enumerate}
  \item Apply $\check(M)$ to $\Reg_1 \Reg_3$ to uncompute $\Reg_3$ and get\label{step:uncompute}
        $\ket{x} \ket{p_x(s)} \ket{0} \ket{0}$.
\end{enumerate}
\end{algobox}

Recall from \Eq{P(s)} that $P(s)$ has the following block structure:
\begin{equation}
  P(s) = \mx{
    P_{UU}      & P_{UM} \\
    (1-s)P_{MU} & (1-s)P_{MM} + s I
  }.
  \label{eq:P(s) block form}
\end{equation}
We will analyze the cases $x \in M$ and $x \in U$ separately. Then the general case will hold by linearity.

If $x \in U$ then the corresponding row of $P(s)$ does not depend on $s$, so $\ket{p_x(s)} = \ket{p_x}$. In this case \step{marked} of the above algorithm is never executed and the remaining steps effectively apply $V(P)$ to produce the correct state.

When $x \in M$ the algorithm is more involved. Let us analyze only \step{marked} where most of the work is done. During this step the state gets transformed as follows:
\begin{align}
            \ket{x} \ancilla \ket{1} \ket{0}
  & \mapsto \ket{x} \ancilla \ket{1} (\sqrt{1-s} \ket{0} + \sqrt{s} \ket{1}) \\
  & \mapsto \ket{x} \bigl( \sqrt{1-s} \ket{p_x} \ket{1} \ket{0} + \sqrt{s} \ket{x} \ket{1} \ket{1} \bigr) \\
  & \mapsto \ket{x} \ket{p_x(s)} \ket{1} \ket{0}.
\end{align}
The first two transformations are straightforward, so let us focus only on the last one which corresponds to \step{marked-d}. The state at the beginning of this step is
\begin{align}
  &  \ket{x} \bigl( \sqrt{1-s} \ket{p_x} \ket{1} \ket{0} + \sqrt{s} \ket{x} \ket{1} \ket{1} \bigr) \\
  &= \ket{x} \Biggl[
       \sqrt{1-s} \sum_{y \in X \setminus \set{x}} \sqrt{P_{xy}} \ket{y} \ket{1} \ket{0}
     + \ket{x} \ket{1} \Bigl( \sqrt{(1-s) P_{xx}} \ket{0} + \sqrt{s} \ket{1} \Bigr)
     \Biggr].
  \label{eq:marked-d}
\end{align}
Note from the second row of matrix $P(s)$ in \Eq{P(s) block form} that all its elements have acquired a factor of $1-s$, except the diagonal ones. Thus in \step{marked-d} we perform a rotation only when $\Reg_1 = \Reg_2$. This rotation affects only the second half of the state in \Eq{marked-d} and transfers all amplitude to $\ket{0}$ in the last register:
\begin{equation}
  \ket{x} \Biggl[
       \sqrt{1-s} \sum_{y \in X \setminus \set{x}} \sqrt{P_{xy}} \ket{y}
     + \sqrt{(1-s) P_{xx} + s} \ket{x}
     \Biggr] \ket{1} \ket{0}
  = \ket{x} \ket{p_x(s)} \ket{1} \ket{0}.
\end{equation}
Finally, \step{uncompute} uncomputes $\Reg_3$ to $\ket{0}$ and the final state is $\ket{x} \ket{p_x(s)} \ket{0} \ket{0}$ as desired.
\end{proof}

\section{An explicit formula for \texorpdfstring{$\limHT$}{HT+(P,M)}} \label{apx:HT comparison}

Recall from \Def{HT(s)} that $\limHT$ is defined as the $s \to 1$ limit of $\HT(s)$. In this appendix we derive an alternative expression for $\limHT$. This formula explicitly expresses $\limHT$ in terms of the Markov chain $P$ and its stationary distribution $\pi$, and makes it easier to evaluate this quantity and compare it to the regular hitting time $\HT(P,M)$.

Let us define unit vectors $\ket{\tilde{U}} \in \R^{\abs{U}}$ and $\ket{\tilde{M}} \in \R^{\abs{M}}$ as follows:
\begin{align}
  \ket{\tilde{U}} &:= \sqrt{\tilde{\pi}_U\tp}, &
  \ket{\tilde{M}} &:= \sqrt{\tilde{\pi}_M\tp},
  \label{eq:tildeUM}
\end{align}
where $\tilde{\pi}_U$ and $\tilde{\pi}_M$ are defined in \Eq{piUM} in terms of the stationary distribution $\pi = (\pi_U \; \pi_M)$ of $P$. Note from \Eq{U} that $\ket{\tilde{U}}$ and $\ket{\tilde{M}}$ are the restrictions of $\ket{U}$ and $\ket{M}$ to the unmarked and marked subspaces. Furthermore, let
\begin{equation}
  \mx{D_{UU} & D_{UM} \\
      D_{MU} & D_{MM}} :=
  \mx{\sqrt{P_{UU} \circ P_{UU}\tp} & \sqrt{P_{UM} \circ P_{MU}\tp} \\
      \sqrt{P_{MU} \circ P_{UM}\tp} & \sqrt{P_{MM} \circ P_{MM}\tp} }
  \label{eq:apx D blocks}
\end{equation}
be the blocks of the discriminant matrix $D(P)$ of $P$ (see \Def{Discriminant}).

\begin{lemma}\label{lem:HT comparison}
If $\HT(P,M)$ is the hitting time of $P$ (see \Def{HT}) and $\limHT$ is the extended hitting time (see \Def{HT(s)}) then
\begin{align}
  \HT(P,M) &= \bra{\tilde{U}} (I - D_{UU})^{-1} \ket{\tilde{U}},     \label{eq:HT block}  \\
  \limHT   &= \bra{\tilde{U}} (I - D_{UU} - S)^{-1} \ket{\tilde{U}}, \label{eq:HT+ block}
\end{align}
where
\begin{equation}
  S := D_{UM} \Biggl[
         (I - D_{MM})^{-1}
       - \frac{(I - D_{MM})^{-1} \proj{\tilde{M}} (I - D_{MM})^{-1}}
              {\bra{\tilde{M}} (I - D_{MM})^{-1} \ket{\tilde{M}}}
       \Biggr] D_{MU}.
  \label{eq:apx S def}
\end{equation}
Vectors $\ket{\tilde{U}}$ and $\ket{\tilde{M}}$ are defined in \Eq{tildeUM} and matrices $D_{UU}, D_{UM}, D_{MU}, D_{MM}$ in \Eq{apx D blocks}.
\end{lemma}

\begin{proof}
Let us first derive \Eq{HT block}. Recall from \Eq{HT(P,M) series} that $\HT(P,M)$ can be written as
\begin{equation}
  \HT(P,M) = \sum_{t=0}^\infty \bra{U} D(1)^t \ket{U},
  \label{eq:apx HT(P,M) series}
\end{equation}
where $D(1)$ is the discriminant matrix of $P(1) = P'$. Recall from \Eq{D(1) circ} that
\begin{equation}
  D(1) = \mx{\sqrt{P_{UU} \circ P_{UU}\tp} & 0 \\ 0 & I}.
  \label{eq:apx D(1)}
\end{equation}
Since $D(1)$ is block diagonal and $\ket{U}$ acts only on the unmarked states $U$, we can restrict each term in \Eq{apx HT(P,M) series} to the unmarked subspace and bring the summation inside:
\begin{equation}
  \HT(P,M) = \bra{\tilde{U}} \sum_{t=0}^\infty D(1)_{UU}^t \ket{\tilde{U}}.
  \label{eq:apx HT(P,M) block series}
\end{equation}
Recall from \Eq{D(s) blocks} that the $UU$ block of $D(s)$ is independent of $s$, hence $D(1)_{UU} = D_{UU}$, the $UU$ block of $D(0)$ given in \Eq{apx D blocks}. Recall from \Prop{Invertible} that $I - P_{UU}$ is invertible. Furthermore, due to \Prop{PUU limit} we can write $(I - P_{UU})^{-1} = \sum_{t=0}^\infty P_{UU}^t$. As $D_{UU}$ and $P_{UU}$ are similar according to \Eq{DUU}, $I - D_{UU}$ is also invertible and $(I - D_{UU})^{-1} = \sum_{t=0}^\infty D_{UU}^t$. If we substitute this in \Eq{apx HT(P,M) block series}, we get \Eq{HT block} and thus prove the first half of the lemma.

For the second half, recall from \Eq{HT(s) series} that for $s \in [0,1)$,
\begin{equation}
  \HT(s) = \sum_{k=1}^{n-1} \frac{\abs{\braket{v_k(s)}{U}}^2}{1-\lambda_k(s)},
  \label{eq:apx HT(s) fractions}
\end{equation}
where $\lambda_k(s)$ and $\ket{v_k(s)}$ are the eigenvalues and eigenvectors of the discriminant matrix $D(s)$. By \Prop{Multiplicity}, for any $s \in [0,1)$, $\lambda_n(s) = 1$ and $\lambda_k(s) < 1$ for all $k \neq n$. Let $\Pi_n(s) := \proj{v_n(s)}$, where $\ket{v_n(s)}$ is given by \Prop{vn(s)}:
\begin{equation}
  \ket{v_n(s)} = \cos \theta(s) \ket{U} + \sin \theta(s) \ket{M}.
  \label{eq:apx vn(s)}
\end{equation}
With this in mind, we can rewrite \Eq{apx HT(s) fractions} as follows:
\begin{align}
  \HT(s)
  &= \bra{U} \Biggl[ \sum_{k=1}^{n-1} \sum_{t=0}^{\infty} \lambda_k^t(s) \proj{v_k(s)} \Biggr] \ket{U} \\
  &= \bra{U} \sum_{t=0}^\infty \bigl( D^t(s) - \Pi_n(s) \bigr) \ket{U} \\
  &= \bra{U} \Biggl[ I + \sum_{t=1}^\infty \bigl( D(s) - \Pi_n(s) \bigr)^t - \Pi_n(s) \Biggr] \ket{U} \\
  &= \bra{U} \Bigl[ \bigl( I - D(s) + \Pi_n(s) \bigr)^{-1} - \Pi_n(s) \Bigr] \ket{U} \\
  &= \bra{U} \bigl( I - D(s) + \Pi_n(s) \bigr)^{-1} \ket{U} - \cos^2 \theta(s), \label{eq:apx HT(s) matrix inverse}
\end{align}
where the last equality follows from \Eq{apx vn(s)}.

Our goal is to compute $\lim_{s \to 1} \HT(s)$. Recall from \Prop{Multiplicity} that $D(1)$ has eigenvalue $1$ with multiplicity $\abs{M}$. Thus, if $\abs{M} > 1$, the matrix $I - D(s) + \Pi_n(s)$ in \Eq{apx HT(s) matrix inverse} is not invertible at $s = 1$, hence we cannot compute the limit by simply substituting $s = 1$. Let us rewrite this expression before we take the limit.

Note that the discriminant matrix $D(s)$ at $s = 0$ agrees with $D(P)$. Using \Eq{D(s) blocks} that relates $D(s)$ and $D(P)$, we can write
\begin{equation}
  I - D(s)
   = \mx{         I - D_{UU} & -\sqrt{1-s} D_{UM} \\
         - \sqrt{1-s} D_{MU} & (1-s) (I - D_{MM}) },
\end{equation}
where $\smx{D_{UU} & D_{UM} \\ D_{MU} & D_{MM}}$ are the blocks of $D(P)$ given in \Eq{apx D blocks}. Next, note that
\begin{equation}
  \ket{v_n(s)} = \mx{
    \cos \theta(s) \ket{\tilde{U}} \\
    \sin \theta(s) \ket{\tilde{M}}
  },
\end{equation}
so we can write
\begin{equation}
  \Pi_n(s)
   = \mx{               \cos^2 \theta(s) \ket{\tilde{U}} \bra{\tilde{U}} &
         \cos \theta(s) \sin   \theta(s) \ket{\tilde{U}} \bra{\tilde{M}} \\
         \cos \theta(s) \sin   \theta(s) \ket{\tilde{M}} \bra{\tilde{U}} &
                        \sin^2 \theta(s) \ket{\tilde{M}} \bra{\tilde{M}} }.
\end{equation}
Putting the two equations together, we can write $I - D(s) + \Pi_n(s)$ as
\begin{equation}
  \mx{
    I - D_{UU} + \cos^2 \theta(s) \ket{\tilde{U}} \bra{\tilde{U}} &
      - \sqrt{1-s} D_{UM} + \cos   \theta(s) \sin \theta(s) \ket{\tilde{U}} \bra{\tilde{M}} \\
      - \sqrt{1-s} D_{MU} + \cos   \theta(s) \sin \theta(s) \ket{\tilde{M}} \bra{\tilde{U}} &
        (1-s)(I - D_{MM}) + \sin^2 \theta(s) \ket{\tilde{M}} \bra{\tilde{M}}
  }.
  \label{eq:apx blocks}
\end{equation}
In \Eq{apx HT(s) matrix inverse} we need only the upper left block of the inverse of the above matrix, since $\ket{U}$ is non-zero only on the $U$ block. According to the block-wise inversion formula,
\begin{equation}
  \mx{A & B \\ B\tp & C}^{-1} = 
  \mx{(A - B C^{-1} B\tp)^{-1} & \ldots\quad \\ \ldots\quad & \ldots\quad}.
  \label{eq:apx Block inverse}
\end{equation}
Thus, \Eq{apx HT(s) matrix inverse} becomes
\begin{equation}
  \HT(s)
  = \bra{\tilde{U}} \bigl( A(s) - B(s) C(s)^{-1} B(s)\tp \bigr)^{-1} \ket{\tilde{U}}
  - \cos^2 \theta(s),
  \label{eq:apx HT(s) and ABC original}
\end{equation}
where $A(s)$, $B(s)$, and $C(s)$ are the blocks in \Eq{apx blocks}. We can further rewrite this as follows:
\begin{equation}
  \HT(s)
  = \bra{\tilde{U}}
    \biggl[
      A(s) - \frac{B(s)}{\sqrt{1-s}} \biggl(\frac{C(s)}{1-s}\biggr)^{-1} \frac{B(s)\tp}{\sqrt{1-s}}
    \biggr]^{-1} \ket{\tilde{U}}
  - \cos^2 \theta(s),
  \label{eq:apx HT(s) and ABC}
\end{equation}
where the extra factors will allows us to deal with the fact that $C(1)$ is singular.

Now we can compute $\lim_{s \to 1} \HT(s)$ for each piece of \Eq{apx HT(s) and ABC} separately. Note from \Eq{cos and sin theta} that $\cos^2 \theta(s)$ vanishes as $s \to 1$. Similarly, we also get that
\begin{align}
  A' &:= \lim_{s \to 1} A(s) = I - D_{UU}, \label{eq:apx A'} \\
  B' &:= \lim_{s \to 1} \frac{B(s)}{\sqrt{1-s}}
       = -D_{UM} + \sqrt{\frac{1-p_M}{p_M}} \ket{\tilde{U}} \bra{\tilde{M}}. \label{eq:apx B'}
\end{align}
Finally, notice that $\lim_{s \to 1} C(s)/(1-s)$ does not exist. Nevertheless, the limit of the inverse exists (in particular, it is a singular matrix) and we can compute it using the Sherman--Morrison formula:
\begin{equation}
  \bigl( X + \proj{\psi} \bigr)^{-1}
  = X^{-1} - \frac{X^{-1} \proj{\psi} X^{-1}}{1 + \bra{\psi} X^{-1} \ket{\psi}}.
\end{equation}
For $s < 1$, we get
\begin{align}
  \biggl(\frac{C(s)}{1-s}\biggr)^{-1}
  &= \biggl(
       I - D_{MM} + \frac{\sin^2 \theta(s)}{1-s} \ket{\tilde{M}} \bra{\tilde{M}}
     \biggr)^{-1} \\
  &= (I - D_{MM})^{-1}
   - \frac{(I - D_{MM})^{-1} \proj{\tilde{M}} (I - D_{MM})^{-1}}
     {\frac{1-s}{\sin^2 \theta(s)} + \bra{\tilde{M}} (I - D_{MM})^{-1} \ket{\tilde{M}}},
\end{align}
so the limit is
\begin{equation}
  C':= \lim_{s \to 1} \biggl(\frac{C(s)}{1-s}\biggr)^{-1}
     = (I - D_{MM})^{-1}
     - \frac{(I - D_{MM})^{-1} \proj{\tilde{M}} (I - D_{MM})^{-1}}
            {\bra{\tilde{M}} (I - D_{MM})^{-1} \ket{\tilde{M}}}.
  \label{eq:apx C'}
\end{equation}

Let $S(s) := B(s) C(s)^{-1} B(s)\tp$ be the matrix that appears in \Eq{apx HT(s) and ABC original}. Since it also appears in \Eq{apx HT(s) and ABC}, we find that
\begin{equation}
  S' := \lim_{s \to 1} S(s) = B' C' {B'}\tp
  \label{eq:apx S' def}
\end{equation}
by substituting $B'$ and $C'$ from Eqs.~\EqRef{apx B'} and \EqRef{apx C'}, respectively. Note from \Eq{apx C'} that $C' \ket{\tilde{M}} = 0$, so \Eq{apx S' def} simplifies to
\begin{equation}
  S' = D_{UM} C' D_{MU} \label{eq:apx S'}
\end{equation}
after we substitute $B'$ from \Eq{apx B'}. Note that $S'$ agrees with \Eq{apx S def} and that
\begin{equation}
  \limHT = \lim_{s \to 1} \HT(s) = \bra{\tilde{U}} (A' - S')^{-1} \ket{\tilde{U}},
  \label{eq:apx HT(1)}
\end{equation}
where $A'$ and $S'$ are given in Eqs.~(\ref{eq:apx A'}) and~(\ref{eq:apx S'}), respectively. This completes the proof.
\end{proof}


\bibliographystyle{alphaurl}
\newcommand{\Proc}{Proceedings of the}
\bibliography{References}

\end{document}